\documentclass[preprint,11pt]{elsarticle}
\usepackage[letterpaper,margin=1in]{geometry}
\usepackage[T1]{fontenc}
\usepackage{amsmath,amssymb,amsfonts,mathtools,amsthm}
\usepackage{booktabs,array,longtable,tabularx,graphicx}
\usepackage{tikz}
\usetikzlibrary{arrows.meta,calc,positioning}
\usepackage{microtype}
\usepackage[hidelinks]{hyperref}
\hypersetup{pdftitle={Exact and Asymptotic Analysis of Greedy Online Facility Assignment on Cycle Metrics},
pdfauthor={Md. Tanzeem Rahat, Md. Rawha Siddiqi Riad, Md. Manzurul Hasan}}
\newcolumntype{Y}{>{\raggedright\arraybackslash}X}
\theoremstyle{plain}
\newtheorem{theorem}{Theorem}
\newtheorem{lemma}[theorem]{Lemma}
\newtheorem{proposition}[theorem]{Proposition}
\newtheorem{corollary}[theorem]{Corollary}
\theoremstyle{definition}

\theoremstyle{remark}

\DeclareMathOperator{\dist}{dist}
\newcommand{\E}{\mathbb{E}}

\newcommand{\V}{\mathcal{V}}

\newcommand{\OPT}{\mathrm{OPT}}

\newcommand{\orcidid}[1]{\href{https://orcid.org/#1}{\textsuperscript{\normalfont\scriptsize ORCID}}}
\journal{Theoretical Computer Science}
\begin{document}
\begin{frontmatter}
\title{Exact and Asymptotic Analysis of Greedy Online Facility Assignment on Cycle Metrics}
\author[aiub]{Md. Tanzeem Rahat\fnref{orcid-rahat}}
\ead{tanzeem.rahat@aiub.edu}
\author[aiub]{Md. Rawha Siddiqi Riad}
\ead{22-46678-1@student.aiub.edu}
\author[aiub]{Md. Manzurul Hasan\fnref{orcid-hasan}\corref{cor}}
\ead{manzurul@aiub.edu}
\cortext[cor]{Corresponding author.}
\fntext[orcid-rahat]{ORCID: \href{https://orcid.org/0009-0008-6474-9877}{0009-0008-6474-9877}.}
\fntext[orcid-hasan]{ORCID: \href{https://orcid.org/0009-0003-8495-1013}{0009-0003-8495-1013}.}
\address[aiub]{Department of Computer Science and Engineering,
American International University-Bangladesh,
408/1, Kuratoli, Khilkhet, Dhaka 1229, Bangladesh}
\begin{abstract}
We identify the free-facility gaps of greedy online assignment on a cycle
with the discrete additive coalescent and use this connection to determine the
exact expected assignment cost. The cycle has circumference $n$, with one
unit-capacity facility at each integer point and $n$ independent uniform
arrivals. Each request is assigned immediately to a nearest free facility.
The conditional one-step cost is the sum of squared gap lengths divided by
$4n$. A uniform cyclic-order argument identifies the unordered gap process
with additive coalescence. The classical Cayley-tree representation then
reduces the accumulated cost to a two-vertex distance functional, yielding
\[
G_n=\frac n4\sum_{d=0}^{n-1}\prod_{j=1}^{d}\left(1-\frac jn\right)
\sim\frac14\sqrt{\frac\pi2}\,n^{3/2}.
\]
The formula requires a linear number of scalar recurrence steps to evaluate.
We also obtain an asymptotic upper bound of $\pi\sqrt3/2$ on the ratio of
expected greedy cost to expected offline optimum under this arrival model.
An exact rational Bellman calculation shows that greedy is optimal for three
facilities but exceeds the optimal online expected cost by $1/512$ for four
facilities. Independent exact recurrences and finite tree enumeration provide
reproducible checks of the analysis.
\end{abstract}
\begin{keyword}
Online assignment \sep Greedy algorithms \sep Average-case analysis \sep
Additive coalescent \sep Cayley trees \sep Stochastic matching
\end{keyword}
\end{frontmatter}

\section{Introduction}\label{sec:intro}

Nearest-available greedy assignment has a useful structural description on
a cycle: consuming a facility merges the two gaps on its sides. Under uniform
independent arrivals, the removal probability is proportional to the sum of
those gap lengths. We show that, after random labeling and averaging over
cyclic order, this process is the discrete additive coalescent. The connection
allows us to compute the total expected assignment cost exactly by using a
distance functional of a uniform labeled tree.

The assignment problem has $n$ unit-capacity facilities equally spaced on a
cycle of circumference $n$. Exactly $n$ requests arrive independently and
uniformly on the cycle, and each is assigned irrevocably to a nearest free
facility in cycle distance. Our quantity of interest is the expected total
cost $G_n$ of this fixed policy. Capacity depletion creates long random gaps,
so later requests can travel much farther than early ones. Tracking the gaps
makes both this geometric effect and its contribution to cost explicit.

\subsection{Main result and proof strategy}

We prove, for every $n\geq3$,
\begin{equation}
G_n=\frac n4\sum_{d=0}^{n-1}
\prod_{j=1}^{d}\left(1-\frac jn\right),
\qquad
G_n\sim\frac14\sqrt{\frac\pi2}\,n^{3/2},
\label{eq:intro-exact}
\end{equation}
where the empty product is one. Thus the average cost per arrival has order
$\sqrt n$, with an explicit leading constant. The formula also replaces a
recurrence over occupied subsets by a scalar recurrence with $n-1$ updates
and a length-$n$ sum; this is a count of arithmetic steps, not a claim about
exact bit complexity.

The proof has two model-specific ingredients. First, when the free facilities
determine gaps $g_1,\ldots,g_k$, the conditional immediate cost is
$\sum_i g_i^2/(4n)$, and removing a boundary between gaps of lengths $x,y$
has probability $(x+y)/(2n)$. Second, conditional on the randomly labeled
gap partition, its oriented cyclic order remains uniform. This yields the
unordered merger probability $(x+y)/(n(k-1))$ of the additive coalescent.
Using its classical random-tree representation, we then express the sum of
expected squared gap masses as an integrated connectivity probability. If
$D$ is the distance between two independent uniform vertices of a uniform
Cayley tree, allowing the vertices to coincide, this gives
\[
G_n=\frac{n^2}{4}\E[(D+1)^{-1}].
\]
The known Cayley-tree distance law yields the finite formula, and a Gaussian
Riemann-sum estimate yields its asymptotic behavior.

This argument uses established coalescent and tree constructions
\citep{aldous1998standard,pitman1999coalescent}, with a related finite
coalescent already present in cyclic parking
\citep{chassaing2002parking}. The contribution here is the identification of
the nearest-free geometric assignment process and the evaluation of its
accumulated distance cost. Section~\ref{sec:related-coalescent} makes this
relationship precise.

\subsection{Stochastic comparisons}

Two further results place the exact greedy cost in context. A Lipschitz
certificate for offline matching gives
\[
\limsup_{n\to\infty}
\frac{G_n}{\E[\OPT_n^{\mathrm{off}}]}
\leq\frac{\pi\sqrt3}{2}.
\]
This compares expectations under the specified arrival distribution.
For the optimal online policy, exact rational Bellman integration gives
$V_3^\star=G_3=17/12$ and $V_4^\star=1135/512$, whereas $G_4=1136/512$.
The square therefore provides an explicit strict improvement over greedy.
These finite calculations leave the asymptotic online-optimality question
open.

\subsection{Relationship to earlier work}\label{sec:earlier-work}

An earlier study by the authors \citep{riad2026isco} considered the same
stochastic cycle model using occupied-set recurrences and symmetry-based
computation. The present paper develops a coalescent analysis that yields an
exact expected-cost formula and sharp asymptotics, together with stochastic
performance comparisons. Table~\ref{tab:earlier-work} summarizes the shared
material and the results established here. The earlier occupied-set recurrence
is retained in \ref{app:recurrences} for independent verification;
\ref{app:numerical-correction} records the correction to the earlier
large-instance numerical estimates.

\begin{table}[htbp]
\centering
\caption{Comparison with the authors' earlier ISCO study
\citep{riad2026isco}, using its public version arXiv:2512.00506v3.}
\label{tab:earlier-work}
\small
\renewcommand{\arraystretch}{1.15}
\begin{tabularx}{\textwidth}{@{}>{\raggedright\arraybackslash}p{0.20\textwidth}YY@{}}
\toprule
Aspect & Earlier ISCO study & Present paper\\
\midrule
Stochastic cycle model & Introduced & Retained\\
Analytical approach & Occupied-set recurrence and dihedral symmetry reduction
& Gap process, additive coalescence, and Cayley trees\\
Expected greedy cost & Finite-instance calculations, including $G_4=71/32$
& Exact formula for every $n\geq3$\\
Asymptotic analysis & No sharp asymptotic theorem
& $G_n\sim\frac14\sqrt{\pi/2}\,n^{3/2}$\\
Stochastic benchmarks & No corresponding results
& Offline comparison and optimal-online Bellman analysis\\
Numerical results & Preliminary computations
& Corrected large-instance values and independent verification\\
\bottomrule
\end{tabularx}
\end{table}

\paragraph{Organization}
Section~\ref{sec:related} discusses related work. Sections~\ref{sec:model}
and~\ref{sec:gaps} define the model and derive its gap dynamics.
Sections~\ref{sec:coalescent}--\ref{sec:asymptotic} prove the structural
representation, exact formula, and asymptotic theorem.
Sections~\ref{sec:benchmarks} and~\ref{sec:optimal-online} give the stochastic
comparisons. Section~\ref{sec:computation} summarizes reproducibility, and
Section~\ref{sec:discussion} discusses limitations and open problems.
Appendices contain the validation recurrences, numerical checks, and the
complete Bellman certificate.

\section{Related Work}\label{sec:related}

\subsection{Additive coalescence, parking, and random trees}\label{sec:related-coalescent}

The additive coalescent uses the merger kernel $K(x,y)=x+y$.
\citet{aldous1998standard} developed the standard additive coalescent and its
scaling theory. For the finite process, \citet[Theorem~5]{pitman1999coalescent}
identified the equivalence between a uniform Cayley tree with a uniform edge
order and the merger probability $(x+y)/(n(k-1))$. We use this existing
representation and include its finite counting proof to fix normalization.
The forest-extension count and tree-distance law used below are also
classical \citep{moon1970counting,meir1970distance}.

The closest geometric precedent is cyclic parking with linear probing.
\citet[Eq.~(1.5) and Section~8]{chassaing2002parking} show that parking-block
masses, with an empty place included in each block, follow the same finite
additive merger rule. Thus the parking connection holds at finite size as
well as in the scaling limit. Their directed parking rule and block-size
analysis provide the relevant structural precedent for our nearest-free
assignment model.

Here the cost is shortest cycle distance from a continuous arrival to its
assigned facility. Its conditional expectation is a specific second-moment
functional of the free gaps, $\sum_i g_i^2/(4n)$. We prove the cyclic-order
invariant for this geometric rule and sum that functional over all occupancy
levels through two-vertex connectivity. This identifies precisely the
additional cost calculation needed to obtain $G_n$ from the classical
coalescent representation. Our novelty claim concerns this assignment
analysis; the coalescent, its tree representation, and the Cayley distance
distribution are established ingredients.

\subsection{Online transportation and metric matching}
In online transportation, requests arrive in a metric space and must be
assigned irrevocably to server sites with residual capacity.
\citet{kalyanasundaram2000transportation} studied this problem under resource
augmentation and analyzed greedy and balance-based rules.
The recent algorithm of \citet{harada2025transportation} gives the first
deterministic $O(m)$-competitive guarantee when there are $m$ server sites.
\citet{chung2008transportation} showed how one extra server per site can
produce polylogarithmic guarantees on hierarchically separated trees and,
through metric embeddings, randomized guarantees on general metrics. These
works precede the OFA terminology and contain the same fundamental combination
of irrevocability, metric cost, fixed sites, and capacity.

Online minimum-cost matching is the unit-capacity special case. Classical
work established strong worst-case limitations for nearest-available greedy
\citep{kalyanasundaram1993online}. Randomized algorithms for general metrics
and algorithms exploiting doubling structure were developed by
\citet{meyerson2006randomized} and \citet{gupta2012doubling}, respectively.
These results concern guarantees over adversarial request sequences, whereas
our theorem evaluates one fixed policy under a fully specified arrival law.

\subsection{Random-order and stochastic arrivals}
The random-order model fixes a request multiset and exposes it in a uniformly
random permutation; the i.i.d. stochastic model instead draws each request
independently from a distribution. These are distinct relaxations of the
adversarial model; \citet{gupta2020randomorder} survey the random-order
framework. For minimum-cost metric matching, \citet{gairing2019greedy}
analyzed nearest-free greedy under random arrival order, and
\citet{duppala2022uniform} studied randomized greedy on the uniform metric.
\citet{gupta2019stochastic} considered i.i.d. requests drawn from a known
distribution and designed competitive algorithms for general and structured
metrics. Earlier average-case work of \citet{tsai1994average} treated greedy
online minimum matching in Euclidean space. More recently,
\citet{balkanski2023power} obtained model-sensitive guarantees and lower bounds
for greedy minimum-cost matching on the line under random and semi-random
inputs. Our model belongs to the i.i.d. category, but has a deterministic
cyclic server set and a continuous uniform request distribution; our goal is
an exact all-$n$ evaluation of greedy rather than a worst-case guarantee over a
class of metric instances.

\subsection{Online facility assignment}
\citet{ahmed2020online} introduced the OFA formulation and terminology for
fixed capacitated facilities on line and graph metrics, with worst-case
competitive analysis. Line-based OFA with arbitrary server layouts and
capacities was developed further by
\citet{harada2023general,harada2024capacity}. These adversarial analyses and
the present distributional analysis address different questions: here the
facility locations are equally spaced on a cycle, the requests are i.i.d.
uniform points, and the objective is the exact expected cost of a fixed policy.

\section{Model}\label{sec:model}

Assume throughout that $n\geq3$. Let
\[
\V=\{0,1,\ldots,n-1\}
\]
be the vertices of the cycle $C_n$, with indices interpreted modulo $n$.
Every edge has length one, so the cycle has circumference $n$. Vertex
$v\in\V$ hosts one unit-capacity facility.

The discrete cycle distance between vertices is
\[
\delta(u,v)=\min\{|u-v|,\,n-|u-v|\}.
\]
A point on edge $e=\{e,e+1\}$ is parameterized by $x=e+t$ with
$t\in[0,1]$. Its distance to facility $v$ is
\begin{equation}
d(e,t;v)=
\min\{t+\delta(e,v),\,1-t+\delta(e+1,v)\}.
\label{eq:point-distance}
\end{equation}

There are exactly $n$ arrivals $X_1,\ldots,X_n$. For every arrival,
the edge is uniform over the $n$ edges and, conditional on the edge, $t$ is
uniform on $[0,1]$. Equivalently, the arrivals are independent uniform points
on the circle of length $n$.

Let $S\subseteq\V$ be the occupied facilities and $F(S)=\V\setminus S$ the
free facilities. Greedy assigns $(e,t)$ to a facility minimizing
$d(e,t;v)$ over $v\in F(S)$. Ties are broken uniformly. Since equal-distance
boundaries form a finite set on each edge, tie-breaking has no effect on any
expectation in this paper.

Write $G_n$ for the expected total greedy assignment cost from the
initial state. The occupied-state recurrence from the earlier study is
recorded in \ref{app:recurrences}; the analysis below uses gaps.

\section{The Gap Process}\label{sec:gaps}

Suppose $k$ facilities remain free and list them in clockwise order as
$f_1,\ldots,f_k$. Let $g_i$ be the clockwise arc length from $f_i$ to
$f_{i+1}$, with $f_{k+1}=f_1$. Then
\begin{equation}
g_i\in\mathbb{Z}_{\geq1},
\qquad
\sum_{i=1}^k g_i=n.
\label{eq:gap-composition}
\end{equation}
For $k=1$, the unique gap has length $n$.

\begin{theorem}[One-step gap cost]\label{thm:gap-cost}
Conditional on the current gap vector
$\mathbf g=(g_1,\ldots,g_k)$, the expected cost of the next greedy assignment is
\begin{equation}
c(\mathbf g)=\frac{1}{4n}\sum_{i=1}^k g_i^2.
\label{eq:gap-cost}
\end{equation}
\end{theorem}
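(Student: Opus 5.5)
The plan is to condition on the next arrival's location on the continuous circle and integrate the distance to the nearest free facility over the whole circle. The free facilities $f_1,\ldots,f_k$ split the circle of length $n$ into the arcs $[f_i,f_{i+1}]$ of lengths $g_i$. These arcs are disjoint up to endpoints and cover the circle by \eqref{eq:gap-composition}. The next arrival is uniform with density $1/n$ on this circle, so
\[
c(\mathbf g)=\frac1n\int_{\text{circle}}\min_{v\in F(S)}\dist(x,v)\,dx=\frac1n\sum_{i=1}^k\int_{\text{arc }i}\min_{v\in F(S)}\dist(x,v)\,dx,
\]
where $\dist$ is the shortest cycle distance from the point $x$ to the facility $v$, as in \eqref{eq:point-distance}. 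Ties occur on a finite set, so they do not affect the integral.

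The key step is to show that for $x$ inside arc $i$, at clockwise position $s\in[0,g_i]$ from $f_i$, the nearest free facility distance equals $\min\{s,\,g_i-s\}$. I would argue this directly. Any free facility other than $f_i$ and $f_{i+1}$ lies outside the open arc. Any path from $x$ to such a facility, going either way around the cycle, must first pass through $f_i$ or $f_{i+1}$. Its length is therefore at least $s$ or at least $g_i-s$. Conversely, $f_i$ and $f_{i+1}$ are reached at distances $s$ and $g_i-s$ along the arc.

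The case $k=1$ needs a separate check, since then $f_i=f_{i+1}$ and the arc is the whole circle with $g_1=n$. The distance to the single facility is $\min\{s,n-s\}$, which is again the same formula. The case $k=2$ with both arcs joining the same pair of facilities is covered by the general argument. I expect this cycle-distance bookkeeping, making sure the geodesic never wraps around more cheaply, to be the only point requiring care. Because both ways around the cycle are handled by the argument above, it goes through.

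It then remains to compute
\[
\int_0^{g}\min\{s,g-s\}\,ds=2\int_0^{g/2}s\,ds=\frac{g^2}{4},
\]
and sum over $i$. This gives $c(\mathbf g)=\frac{1}{4n}\sum_i g_i^2$. The edge/offset parameterization of \eqref{eq:point-distance} matches the continuous uniform law on the circle, since the model states the two are equivalent.
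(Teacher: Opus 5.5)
Your proposal is correct and follows the paper's proof essentially step for step. You split the circle into the $k$ gaps and note that the gap endpoints are the nearest free facilities, so the cost at offset $s$ is $\min\{s,g_i-s\}$; you then integrate to $g_i^2/4$, sum and divide by $n$, and check the $k=1$ case separately (your remark that any route out of the arc must first pass through an endpoint just spells out what the paper states in one line).
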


\begin{proof}
First suppose $k\geq2$. The interior of a gap of length $g_i$ has no free
facility, and its two distinct endpoints are free. They are therefore the
nearest free facilities to every point in that gap. At distance
$x\in[0,g_i]$ from the clockwise initial endpoint, greedy pays
$\min\{x,g_i-x\}$. Thus the integral of the cost over that gap is
\[
\int_0^{g_i}\min\{x,g_i-x\}\,dx
=2\int_0^{g_i/2}x\,dx=\frac{g_i^2}{4}.
\]
Summing and dividing by the circumference $n$ gives
\eqref{eq:gap-cost}. If $k=1$, both endpoints of the unique length-$n$ arc
represent the same free facility, and its two directions around the cycle
still give cost $\min\{x,n-x\}$. The same integral is $n^2/4$, so the formula
also holds in the terminal state.
\end{proof}

\begin{theorem}[Adjacent gap merger]\label{thm:gap-transition}
For $k\geq2$, the free facility $f_i$ between gaps $g_{i-1}$ and $g_i$ is
consumed next with probability
\begin{equation}
p_i(\mathbf g)=\frac{g_{i-1}+g_i}{2n}.
\label{eq:gap-transition}
\end{equation}
Afterwards the two gaps are replaced by one gap of length
$g_{i-1}+g_i$.
\end{theorem}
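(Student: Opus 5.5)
The plan is to determine, for each free facility $f_i$, the exact set of arrival positions that greedy assigns to it, and then read off its probability from the uniform arrival law. As in the proof of Theorem~\ref{thm:gap-cost}, I take $k\geq2$ free facilities in clockwise order with gaps $g_1,\ldots,g_k$. The gap $g_{i-1}$ is the arc from $f_{i-1}$ to $f_i$, and $g_i$ is the arc from $f_i$ to $f_{i+1}$. When $k=2$, the facilities $f_{i-1}$ and $f_{i+1}$ coincide, but the two gaps are still distinct arcs.

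The first step is to identify the nearest free facility for each arrival. An arrival $X$ lies in the closure of some gap $g_j$. That gap contains no free facility in its interior, so any path from $X$ to another free facility must first pass through one of the two endpoints of the gap. Hence the nearest free facilities to $X$ are among those two endpoints. Greedy therefore sends $X$ to the endpoint of $g_j$ at distance $\min\{x,g_j-x\}$, where $x$ is measured from the clockwise initial endpoint. It follows that $f_i$ receives $X$ exactly when $X$ lies in the half of $g_{i-1}$ adjacent to $f_i$, or in the half of $g_i$ adjacent to $f_i$. Two situations remain to check:
\begin{itemize}
\item when $k=2$, the same facility is the far endpoint of both gaps;
\item when $g_j=1$, the gap has no interior lattice points.
\end{itemize}
Neither affects the argument, since only arc-interior freeness is used. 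Midpoints, and points equidistant to several facilities, form a finite set, so they have probability zero and tie-breaking is irrelevant, as the paper notes.

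The second step is to compute the probability. The set assigned to $f_i$ is an arc of total length $g_{i-1}/2+g_i/2$. Since $X$ is uniform on a circle of circumference $n$, this arc has probability $(g_{i-1}+g_i)/(2n)$. As a consistency check, summing over $i$ gives $\sum_i g_i/n=1$, which confirms that the regions partition the circle up to a null set.

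The third step is to describe the new gap vector. Once $f_i$ is consumed, the free facilities are $f_1,\ldots,f_k$ with $f_i$ removed, and the cyclic order of the remaining ones is unchanged. The clockwise arc from $f_{i-1}$ to $f_{i+1}$ is the concatenation of the arcs with lengths $g_{i-1}$ and $g_i$, so it has length $g_{i-1}+g_i$. All other gaps are unaffected, and when $k=2$ the result is the single gap of length $n$.

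I expect the main (mild) obstacle to be justifying rigorously that, with the distance $d(e,t;v)$ of \eqref{eq:point-distance}, the nearest free facility to a point in a gap is one of that gap's endpoints, including the degenerate $k=2$ wrap-around case. I would argue this from the fact that any geodesic on the cycle leaving the gap must pass through an endpoint.
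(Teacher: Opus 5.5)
Your proposal is correct and follows essentially the same route as the paper: the region sent to $f_i$ is the half of each adjacent gap nearest to it, an arc of length $(g_{i-1}+g_i)/2$, so uniformity of the arrival on a circle of length $n$ gives the probability, and removing $f_i$ concatenates the two arcs into one gap. You additionally make explicit the nearest-endpoint justification (the same fact the paper uses in the proof of Theorem~\ref{thm:gap-cost}) and the degenerate $k=2$ and tie cases, which the paper's two-line proof leaves implicit.
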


\begin{proof}
The Voronoi region of $f_i$ extends halfway into each adjacent gap, so its
length is $(g_{i-1}+g_i)/2$. Uniform arrival on a circle of length $n$ gives
\eqref{eq:gap-transition}. Removing $f_i$ leaves no free facility at the
common boundary, so the two arcs become one.
\end{proof}

\begin{figure}[t]
\centering
\begin{tikzpicture}[x=1.15cm,y=1cm,>=Latex]
\node[circle,draw,inner sep=1.8pt] (a) at (0,0) {};
\node[circle,draw,fill=black,inner sep=1.8pt] (b) at (2.4,0) {};
\node[circle,draw,inner sep=1.8pt] (c) at (4.8,0) {};
\draw[thick] (a)--(b)--(c);
\node[above=2pt of a] {$f_{i-1}$};
\node[above=2pt of b] {$f_i$};
\node[above=2pt of c] {$f_{i+1}$};
\node[below] at (1.2,0) {$g_{i-1}$};
\node[below] at (3.6,0) {$g_i$};
\draw[->,thick] (5.25,0)--(5.95,0);
\node[circle,draw,inner sep=1.8pt] (d) at (6.4,0) {};
\node[circle,draw,inner sep=1.8pt] (e) at (9.2,0) {};
\draw[thick] (d)--(e);
\node[below] at (7.8,0) {$g_{i-1}+g_i$};
\end{tikzpicture}
\caption{Consuming the free facility $f_i$ merges its two adjacent gaps. The
black fill marks the facility consumed at the current step.}
\label{fig:gap-merge}
\end{figure}

The same formulas give an exact gap-state recurrence, recorded in
\ref{app:recurrences}. We next identify the unordered gap process.

\section{Identification with the Additive Coalescent}\label{sec:coalescent}

We now forget the cyclic order and retain only the gap masses. To distinguish
equal masses, assign the labels $1,\ldots,n$ to the initial unit gaps by a
uniform random bijection. This random labeling does not change the unlabeled gap process.
At later times, a gap is a block of labels and its mass is the block size.

\begin{theorem}[Additive-coalescent transition]\label{thm:additive}
Conditional on a partition into $k\geq2$ labeled blocks with masses
$s_1,\ldots,s_k$, a specified pair of distinct blocks of masses $x$ and $y$
is the next pair merged with probability
\begin{equation}
\frac{x+y}{n(k-1)}.
\label{eq:additive-transition}
\end{equation}
Consequently, the unordered gap-size process is the discrete additive
coalescent started from $n$ unit masses.
\end{theorem}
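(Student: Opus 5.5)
The plan is to prove a cyclic-order invariance statement and then average the adjacent-merger probability of Theorem~\ref{thm:gap-transition} over the cyclic order. Concretely, I will show by induction on the number of steps that, conditional on the history of the labeled partition up to the present (in particular on the current blocks $B_1,\ldots,B_k$ and their masses), the oriented cyclic arrangement of the blocks around the cycle is uniform over all $(k-1)!$ cyclic orders. Granting this, the rest is a short computation. By Theorem~\ref{thm:gap-transition}, two blocks of masses $x,y$ are merged exactly when they are adjacent and the facility between them is consumed, which has probability $(x+y)/(2n)$. Let $j\neq i$ be two specified blocks. In a uniform random cyclic order of $k\ge3$ blocks, $B_j$ is the clockwise successor of $B_i$ with probability $1/(k-1)$, and likewise its predecessor. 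The probability that this pair merges next is therefore
\[
\frac{2}{k-1}\cdot\frac{x+y}{2n}=\frac{x+y}{n(k-1)} .
\]
For $k=2$ the two blocks are adjacent on both sides and only one free facility remains between them in each direction. Here I will check directly that the merger probability is $1=(x+y)/n$, using $x+y=n$.

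The invariance is the main obstacle, and I would prove it through a uniform labeling argument.

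At time $0$ the labels are a uniform bijection onto the $n$ unit gaps, so the cyclic order of the singleton blocks is uniform. Rotation symmetry lets us fix the gap containing label $1$. Now suppose the cyclic order is uniform given the history. Fix a candidate next partition, obtained by merging blocks $B_i$ and $B_j$. For each cyclic order $\sigma$ of the current blocks, the probability of this step is $(s_i+s_j)/(2n)$ times the number of sides on which $B_i,B_j$ are adjacent in $\sigma$. The merged arrangement is $\sigma$ with $B_i,B_j$ concatenated in some order. Each cyclic order $\tau$ of the $k-1$ new blocks arises from exactly two orders $\sigma$, namely with $B_iB_j$ or $B_jB_i$ inserted in the merged position. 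Each of these contributes weight $(s_i+s_j)/(2n)$ times the uniform prior $1/(k-1)!$ (for $k\ge3$ the adjacency count is $1$). The joint weight of $(\tau,\text{new partition})$ is therefore independent of $\tau$. Bayes' rule then gives a uniform conditional order on the $(k-1)$-block partition.

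Care is needed in two places. First, the history must record the labeled partition but not the internal arrangement of labels within blocks. Second, the claim must condition on the whole past and not only the present, so that the partition sequence is Markov.

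Finally, since these transition probabilities depend only on the current labeled masses, the labeled partition process is a Markov chain with the additive kernel. Forgetting labels, which by the remark before the theorem does not change the unlabeled gap process, yields the discrete additive coalescent from $n$ unit masses.
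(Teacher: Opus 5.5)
Your proposal is correct and follows the paper's proof essentially step for step. It uses the same uniform oriented-cyclic-order invariant, the same adjacency factor $2/(k-1)$ multiplied by $(x+y)/(2n)$, the same separate check for $k=2$, and the same two-predecessor Bayes argument to propagate the invariant. The only difference is that you condition on the whole history of labeled partitions rather than only the current one. This makes explicit the Markov property behind the ``consequently'' clause, which the paper leaves implicit.
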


\begin{proof}
An \emph{oriented cyclic order} is taken modulo rotation but not reflection;
there are $(k-1)!$ such orders of $k$ distinguishable blocks. We first prove
the invariant that, conditional on the current labeled partition, its oriented
cyclic order is uniform. At time zero this follows from the uniform bijection
between the labels and the fixed cyclic positions of the unit gaps.

Assume the invariant at a level with $k\geq3$ blocks. Among the $(k-1)!$
oriented cyclic orders, two specified blocks are adjacent in exactly
$2(k-2)!$ orders: contract the pair and choose which of its two members comes
first. Thus
\[
 \Pr(\text{the specified blocks are adjacent}\mid\text{partition})
 =\frac{2(k-2)!}{(k-1)!}=\frac{2}{k-1}.
\]
If their masses are $x$ and $y$, they have one common boundary whenever they
are adjacent, and Theorem~\ref{thm:gap-transition} selects that boundary with
probability $(x+y)/(2n)$. Multiplication gives
$(x+y)/(n(k-1))$. If $k=2$, the two blocks share both cyclic boundaries; the
two corresponding removal probabilities sum to
$2(x+y)/(2n)=1$, which is again \eqref{eq:additive-transition} because
$x+y=n$ and $k-1=1$.

It remains to verify the invariant after a merger. For $k\geq3$, fix a
predecessor partition and a pair to be merged. Every oriented cyclic order of
the resulting $k-1$ blocks has exactly two predecessor orders, obtained by
expanding the merged block as $(A,B)$ or $(B,A)$. Both predecessor orders have
the same conditional probability $1/(k-1)!$, and their common boundary has
the same selection weight $(x+y)/(2n)$. Hence, conditional on this predecessor
partition and this merger, every resulting order has equal probability.
Conditioning only on the new partition merely mixes such uniform
distributions, so it remains uniform. The transition from two blocks to one
is trivial. Induction proves both the invariant and the stated transition at
every level.
\end{proof}

The transition in \eqref{eq:additive-transition} is the embedded jump chain of
the additive Marcus--Lushnikov coalescent, whose kernel is $K(x,y)=x+y$.
Random-forest constructions of the same chain are classical
\citep{pitman1999coalescent}. We next give the finite construction needed for
our cost calculation.

\begin{lemma}[Cayley forest extension {\citep{moon1970counting,pitman1999coalescent}}]\label{lem:forest-extension}
Let $F$ be a fixed forest on $[n]=\{1,\ldots,n\}$ with $k$ components of
sizes $s_1,\ldots,s_k$. The number of labeled trees on $[n]$ that contain
every edge of $F$ is
\begin{equation}
n^{k-2}\prod_{i=1}^k s_i.
\label{eq:forest-extension}
\end{equation}
\end{lemma}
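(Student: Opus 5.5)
We reduce the count to the number of spanning trees of a weighted complete graph on the components and then evaluate that number with a generalized Cayley formula. Contract each component $C_i$ of $F$ (of size $s_i$) to a single vertex $i$. A labeled tree $T$ on $[n]$ containing $F$ is then the same thing as a choice of $k-1$ further edges of $K_n$. Each such edge joins two distinct components, since an edge inside a component would close a cycle with the tree $F|_{C_i}$. After contraction these edges form a spanning tree on the vertex set $[k]$.

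The correspondence is a bijection in the following sense. Fix a spanning tree $\tau$ on $[k]$. For each edge $\{i,j\}$ of $\tau$, choose any edge of $K_n$ between $C_i$ and $C_j$; there are $s_is_j$ choices. The union of $F$ with the chosen edges has $n-k+k-1=n-1$ edges and is connected, so it is a tree. Distinct choices give distinct trees, and every tree containing $F$ arises in exactly one way. Hence the count is
\[
\sum_{\tau}\prod_{\{i,j\}\in\tau}s_is_j=\sum_{\tau}\prod_{i=1}^k s_i^{\deg_\tau(i)},
\]
where $\tau$ ranges over labeled trees on $[k]$.

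Next we evaluate this degree-weighted sum with the multinomial form of Cayley's formula, which follows from the Prüfer bijection. Under that bijection, the trees on $[k]$ with $\deg(i)=d_i$ correspond exactly to Prüfer sequences of length $k-2$ in which $i$ appears $d_i-1$ times. Their number is therefore $\binom{k-2}{d_1-1,\ldots,d_k-1}$. Summing over all degree vectors and applying the multinomial theorem gives
\[
\sum_\tau\prod_i s_i^{\deg_\tau(i)}=\Bigl(\prod_i s_i\Bigr)\sum_{\mathbf d}\binom{k-2}{d_1-1,\ldots,d_k-1}\prod_i s_i^{d_i-1}=\Bigl(\prod_i s_i\Bigr)(s_1+\cdots+s_k)^{k-2}.
\]
Because $\sum_i s_i=n$, this equals $n^{k-2}\prod_i s_i$, which is the formula \eqref{eq:forest-extension}.

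The edge cases need a separate check. For $k=1$, $F$ is itself a spanning tree, so the count is $1=n^{-1}\cdot n$. For $k=2$, the count is $s_1s_2=n^0s_1s_2$, which also matches.

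The main obstacle is the bijection step: it must not overcount. The point to verify is that each added edge determines its pair of components uniquely. Since $F$ is fixed, the set $T\setminus F$ is determined by $T$, so this holds. Given that, the Prüfer degree count makes the rest mechanical.
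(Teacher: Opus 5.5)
Your proof is correct and takes the same route as the paper: contract the components of $F$, count $\prod_i s_i^{\deg_R(i)}$ lifts for each tree $R$ on the contracted vertices, and evaluate the weighted sum with the Prüfer code. You add detail the paper leaves implicit, namely the bijection check and the multinomial degree count behind the weighted Cayley identity, and your separate treatment of $k=1$ matches the paper's.
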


\begin{proof}
If $k=1$, the forest $F$ is already a tree. There is exactly one completion,
and the right-hand side is $n^{-1}s_1=1$. Now assume $k\geq2$.
Contract the components of $F$. For a fixed tree $R$ on the $k$ contracted
vertices, an edge incident with contracted vertex $i$ has $s_i$ choices for
its endpoint in that component. Thus $R$ has
$\prod_i s_i^{\deg_R(i)}$ lifts. The weighted Cayley identity
\[
\sum_R\prod_{i=1}^k s_i^{\deg_R(i)}
=\left(\prod_{i=1}^k s_i\right)
\left(\sum_{i=1}^k s_i\right)^{k-2}
\]
follows directly from the Pr\"ufer code. Since $\sum_i s_i=n$, summing the
lifts gives \eqref{eq:forest-extension}.
\end{proof}

\begin{theorem}[Uniform-tree representation {\citep[Theorem~5]{pitman1999coalescent}}]\label{thm:tree-representation}
Choose a uniform labeled Cayley tree $T$ on $[n]$ and independently order its
$n-1$ edges uniformly. Starting with $n$ singleton components, add the edges
in that order. The resulting component-size chain has transition
\eqref{eq:additive-transition}, and hence has the same law as the unordered
greedy gap-size chain.
\end{theorem}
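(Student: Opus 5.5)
We prove Theorem~\ref{thm:tree-representation} by a direct counting argument built on Lemma~\ref{lem:forest-extension}. The idea is to compute, for each step, the conditional probability of the next edge given the forest built so far.

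Fix $k\geq2$ and a forest $F$ on $[n]$ with $k$ components of sizes $s_1,\ldots,s_k$, so $F$ has $n-k$ edges. Consider the event that the first $n-k$ edges of the ordered tree are exactly the edges of $F$, in some fixed order. The pair (tree, edge order) is uniform over $n^{n-2}(n-1)!$ outcomes. The number of outcomes realizing this event is the number of trees containing $F$, which is $n^{k-2}\prod_i s_i$ by Lemma~\ref{lem:forest-extension}, multiplied by $(k-1)!$ orderings of the remaining $k-1$ edges. This count depends on the history only through the current forest, and in fact only through its component sizes.

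Next, condition additionally on the $(n-k+1)$-st edge being a specific edge $\{u,v\}$, with $u$ in component $i$ and $v$ in component $j\neq i$. The new forest $F'=F\cup\{u,v\}$ has $k-1$ components. The sizes are unchanged except that $s_i,s_j$ are replaced by $s_i+s_j$. The count of realizing outcomes becomes
\[
n^{k-3}\,(s_i+s_j)\prod_{l\neq i,j}s_l\cdot (k-2)!.
\]
The ratio of the two counts gives the conditional probability of adding edge $\{u,v\}$ next:
\[
\frac{(s_i+s_j)}{n\,s_is_j\,(k-1)}.
\]
This uses $k\geq3$ so that $(k-2)!$ and $n^{k-3}$ make sense. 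For $k=2$, the factor $n^{-1}$ and $0!=1$ fit the formula with the convention $n^{k-3}(s_1+s_2)=n^{-1}\cdot n=1$, which is checked directly.

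Summing over the $s_is_j$ choices of $u$ in component $i$ and $v$ in component $j$ gives the probability $(s_i+s_j)/(n(k-1))$ that blocks $i$ and $j$ merge next. This is \eqref{eq:additive-transition}.

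Because this conditional law depends only on the current partition, the component-partition process is Markov with the additive-coalescent transition. The labeled partition process of Theorem~\ref{thm:additive} has the same initial state and the same transitions, so the two chains are equal in law. Forgetting labels then gives equality in law for the unordered size chains. The main care point is verifying that conditioning on the whole edge history rather than only the partition is harmless; the count above shows the conditional probability is a function of the partition alone, which settles it.
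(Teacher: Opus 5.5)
Your proof is correct and is essentially the paper's argument: both take the ratio of completion counts from Lemma~\ref{lem:forest-extension} to get $(x+y)/(nxy)$ per candidate edge, and both obtain the factor $1/(k-1)$ from the uniform edge order. The difference is only bookkeeping. You condition on the full ordered edge history and fold the order into the counts via $(k-1)!$ and $(k-2)!$, which gives the Markov property directly. Your $k=2$ caveat is unnecessary, since the lemma already covers a single spanning component with count $1$.
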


\begin{proof}
Suppose that the first $n-k$ added edges form a particular forest $F$ with
$k\geq2$ components. For every tree $T$ containing $F$, the probability that
the first $n-k$ positions of an independent uniform edge order contain exactly
$E(F)$ is
$\binom{n-1}{n-k}^{-1}$; for a tree not containing $F$ it is zero. Since the
prior on $T$ is uniform, the conditional completion is therefore uniform
among all trees containing $F$.

Let $A$ and $B$ be components of sizes $x$ and $y$, and fix one of the $xy$
possible vertex pairs joining them. By Lemma~\ref{lem:forest-extension}, the
ratio of the numbers of completions containing $F$ plus this candidate edge
and containing $F$ alone is
\[
\frac{n^{k-3}(x+y)\prod_{C\neq A,B}|C|}
{n^{k-2}xy\prod_{C\neq A,B}|C|}
=\frac{x+y}{nxy}.
\]
No tree completion can contain two distinct candidate edges between $A$ and
$B$, because those two edges together with paths inside $A$ and $B$ would form
a cycle. The $xy$ candidate-edge events are thus mutually exclusive, and a
uniform completion connects $A$ directly to $B$ with probability
$(x+y)/n$.

Conditional on a completion $T$ and on $E(F)$ occupying the first $n-k$
positions, each of the remaining $k-1$ edges is equally likely to be next.
Consequently the next merger is $A,B$ with probability
$(x+y)/(n(k-1))$. This probability depends on $F$ only through its component
sizes. Averaging over all forests that realize a given labeled partition
therefore gives exactly \eqref{eq:additive-transition}. The initial partitions
are the same, so induction proves equality in law of the two component-size
chains.
\end{proof}

\begin{figure}[t]
\centering
\begin{tikzpicture}[
  node distance=7mm and 8mm,
  box/.style={draw,rounded corners,align=center,inner sep=4pt,
              text width=0.27\textwidth,font=\small},
  arrow/.style={-{Latex[length=2mm]},thick}
]
\node[box] (gap) {cyclic gap state $\mathbf g$\\one-step cost
$Q_k/(4n)$};
\node[box,right=of gap] (coal) {unordered labeled blocks\\additive merger
$\Pr(A,B)=(x+y)/(n(k-1))$};
\node[box,right=of coal] (tree) {uniform Cayley tree\\independent uniform
edge order};
\node[box,below=of tree] (conn) {forest susceptibility\\
$Q_k/n^2=\Pr(U\leftrightarrow V\mid F_k)$};
\node[box,left=of conn] (dist) {tree distance $D$\\connectivity-time sum
$n/(D+1)$};
\node[box,left=of dist] (formula) {exact expectation\\
$G_n=\frac{n^2}{4}\E[(D+1)^{-1}]$};
\draw[arrow] (gap)--(coal);
\draw[arrow] (coal)--(tree);
\draw[arrow] (tree)--(conn);
\draw[arrow] (conn)--(dist);
\draw[arrow] (dist)--(formula);
\end{tikzpicture}
\caption{Logical route from geometric greedy assignments to the exact formula.
The top-row arrows identify the unordered gap-size, additive-coalescent, and
Cayley-forest component-size chains in law; the lower arrows are exact
conditional-probability and expectation identities. No step is asymptotic.}
\label{fig:proof-roadmap}
\end{figure}

\section{Exact Expected Cost}\label{sec:exact}

Let $Q_k$ be the sum of squared component sizes when $k$ gaps remain:
\[
Q_k=\sum_{i=1}^k g_i^2.
\]
Theorem~\ref{thm:gap-cost} gives
\begin{equation}
G_n=\frac{1}{4n}\sum_{k=1}^n\E[Q_k].
\label{eq:cost-susceptibility}
\end{equation}

In the tree representation, independently choose vertices $U$ and $V$
uniformly from $[n]$, allowing $U=V$, and let $D=\dist_T(U,V)$.
Conditional on a forest with component sizes $g_1,\ldots,g_k$,
\[
\Pr(U\leftrightarrow V\mid\text{forest})
=\frac{Q_k}{n^2}.
\]

\begin{lemma}[Connectivity sum]\label{lem:connectivity-sum}
For $0\leq d\leq n-1$, conditional on $D=d$,
\begin{equation}
\sum_{k=1}^n
\Pr(U\leftrightarrow V\text{ when }k\text{ components remain}\mid D=d)
=\frac{n}{d+1}.
\label{eq:connectivity-sum}
\end{equation}
\end{lemma}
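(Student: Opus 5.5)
Fix the tree $T$ and the vertices $U,V$ with $\dist_T(U,V)=d$, and work conditionally on this data; only the uniform edge order $\sigma$ remains random. In the tree representation of Theorem~\ref{thm:tree-representation}, the forest with $k$ components consists of the first $n-k$ edges of $\sigma$. Since $T$ is a tree, $U$ and $V$ lie in the same component of this forest exactly when every edge of the unique $T$-path from $U$ to $V$ has been added. That path has $d$ edges. So the event ``$U\leftrightarrow V$ when $k$ components remain'' says that all $d$ path edges lie among the first $m=n-k$ positions of $\sigma$. Its conditional probability is $\binom{m}{d}/\binom{n-1}{d}$, because the set of positions occupied by the $d$ path edges is a uniform $d$-subset of $\{1,\dots,n-1\}$. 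This quantity depends on $(T,U,V)$ only through $d$, so conditioning on $D=d$ just averages a constant and gives the same value.

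Next sum over $k=1,\dots,n$, that is, over $m=n-k=0,\dots,n-1$. The hockey-stick identity gives $\sum_{m=0}^{n-1}\binom{m}{d}=\binom{n}{d+1}$. Therefore the left side of \eqref{eq:connectivity-sum} equals
\[
\frac{\binom{n}{d+1}}{\binom{n-1}{d}}=\frac{n}{d+1}.
\]
The case $d=0$ (that is, $U=V$) is consistent: every term equals $1$, and the sum is $n$.

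The only point needing care is the first step. The connectivity event has to be read in the tree model rather than the gap model. This is legitimate because Theorem~\ref{thm:tree-representation} gives equality in law of the component-size chains, and $Q_k/n^2$ is a function of the sizes alone. The equivalence between path edges being added and $U\leftrightarrow V$ uses only acyclicity of $T$. The rest is a binomial computation.
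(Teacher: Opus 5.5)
Your proof is correct and follows the paper's argument. Like the paper, you condition on $(T,U,V)$, use independence of the edge order and acyclicity of $T$ to reduce connectivity to the presence of the $d$ path edges, and then sum the resulting hypergeometric probabilities over $k$. The only difference is bookkeeping in the final sum. The paper tracks the uniform $(k-1)$-subset of absent edges, obtains $\binom{n-1-d}{k-1}/\binom{n-1}{k-1}$, and evaluates the sum with a beta integral. You track the positions of the path edges instead, so the denominator $\binom{n-1}{d}$ does not depend on $k$ and the hockey-stick identity closes the sum more directly.
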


\begin{proof}
When $k$ components remain, the last $k-1$ edges in the uniform ordering are
absent. Because the edge order is independent of $T,U,V$, conditional on all
three objects these absent edges form a uniform $(k-1)$-subset of the $n-1$
tree edges. The vertices $U,V$ are connected exactly when none of the $d$
edges on their unique tree path is absent. Therefore the conditional
probability is
\[
\frac{\binom{n-1-d}{k-1}}{\binom{n-1}{k-1}},
\]
with the numerator interpreted as zero when necessary. Put $N=n-1$ and
$r=k-1$. The beta integral
\[
\binom{N}{r}^{-1}
=(N+1)\int_0^1 t^r(1-t)^{N-r}\,dt
\]
gives
\begin{align*}
\sum_{r=0}^{N-d}\frac{\binom{N-d}{r}}{\binom Nr}
&=(N+1)\int_0^1(1-t)^d
\sum_{r=0}^{N-d}\binom{N-d}{r}t^r(1-t)^{N-d-r}\,dt\\
&=(N+1)\int_0^1(1-t)^d\,dt
=\frac{N+1}{d+1}.
\end{align*}
Substituting $N=n-1$ proves the claim.
\end{proof}

\begin{theorem}[Tree-distance identity]\label{thm:distance-identity}
\begin{equation}
G_n=\frac{n^2}{4}\E\left[\frac{1}{D+1}\right].
\label{eq:distance-identity}
\end{equation}
\end{theorem}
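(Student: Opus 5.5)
The plan is to chain the earlier results in the order shown in Figure~\ref{fig:proof-roadmap}. We start from \eqref{eq:cost-susceptibility}, which gives $G_n=\frac{1}{4n}\sum_{k=1}^n\E[Q_k]$. Here $Q_k$ is the sum of squared gap lengths when $k$ free facilities remain. This quantity depends only on the unordered gap-size multiset. By Theorem~\ref{thm:additive} and Theorem~\ref{thm:tree-representation}, that multiset has the same law at every level $k$ as the component-size multiset of the forest $F_k$. This forest is obtained from a uniform Cayley tree $T$ by adding the first $n-k$ edges of an independent uniform edge order. Hence $\E[Q_k]$ may be computed in the tree model, and we regard $Q_k$ as the forest's sum of squared component sizes from now on.

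Next we introduce the auxiliary independent uniform vertices $U,V\in[n]$, with $U=V$ allowed. The displayed identity before Lemma~\ref{lem:connectivity-sum} states that, conditional on the forest, $\Pr(U\leftrightarrow V\mid F_k)=Q_k/n^2$. This holds because $U$ and $V$ are independent of the forest and fall in a common component $C$ with probability $\sum_C |C|^2/n^2$. Taking expectations gives $\E[Q_k]=n^2\Pr(U\leftrightarrow V\text{ in }F_k)$. Summing over $k$ and exchanging the finite sum with the expectation then yields
\[
\sum_{k=1}^n\E[Q_k]=n^2\,\E\Bigl[\sum_{k=1}^n\Pr(U\leftrightarrow V\text{ in }F_k\mid T,U,V)\Bigr].
\]

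Lemma~\ref{lem:connectivity-sum} evaluates the inner conditional sum. It depends on $(T,U,V)$ only through $D=\dist_T(U,V)$ and equals $n/(D+1)$. The lemma is phrased conditionally on $D=d$, but its proof actually conditions on all of $(T,U,V)$, so we may use it in that stronger form. Alternatively, we can condition on $D$ by the tower property. Either way, $\sum_k\E[Q_k]=n^3\E[(D+1)^{-1}]$. Substituting into \eqref{eq:cost-susceptibility} gives $G_n=\frac{1}{4n}\cdot n^3\E[(D+1)^{-1}]=\frac{n^2}{4}\E[(D+1)^{-1}]$, which is \eqref{eq:distance-identity}.

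The main point needing care is the first transfer. Equality in law of the component-size chains must be used to justify equality of $\E[Q_k]$ level by level. This requires that level $k$ in the greedy process, meaning $k$ free facilities after $n-k$ arrivals, correspond to the forest with $k$ components. Both chains start at $n$ singletons and reduce the number of blocks by exactly one per step, so this correspondence holds. The terminal level $k=1$ contributes $Q_1=n^2$ on both sides, and it is also covered by Theorem~\ref{thm:gap-cost}. With these points checked, the remainder is bookkeeping.
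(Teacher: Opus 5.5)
Your proposal is correct and follows the paper's proof essentially step for step. You write $\E[Q_k]=n^2\Pr(U\leftrightarrow V)$ at each level, exchange the finite sum with the expectation, and apply Lemma~\ref{lem:connectivity-sum} to obtain $n/(D+1)$; your extra remarks on the level-by-level transfer of laws and on conditioning on $(T,U,V)$ rather than $D$ only make explicit what the paper uses implicitly.
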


\begin{proof}
From the conditional identity preceding Lemma~\ref{lem:connectivity-sum},
$\E[Q_k]=n^2\Pr(U\leftrightarrow V\text{ at level }k)$. Hence
\begin{align*}
G_n
&=\frac{1}{4n}\sum_{k=1}^n\E[Q_k]\\
&=\frac n4\E\!\left[
  \sum_{k=1}^n
  \Pr(U\leftrightarrow V\text{ at level }k\mid D)
  \right]\\
&=\frac n4\E\!\left[\frac n{D+1}\right],
\end{align*}
where the last equality is Lemma~\ref{lem:connectivity-sum}. This is
\eqref{eq:distance-identity}.
\end{proof}

The distance distribution below is classical in random-tree enumeration
\citep{meir1970distance}; the short count is included to fix normalization and
cover the full range of $d$.

\begin{lemma}[Distance distribution in a Cayley tree]\label{lem:distance-law}
For two fixed distinct vertices $u,v$ of a uniform Cayley tree and
$1\leq d\leq n-1$,
\begin{equation}
\Pr(\dist_T(u,v)=d)
=\frac{(d+1)(n-2)!}{(n-d-1)!\,n^d}.
\label{eq:distance-law}
\end{equation}
\end{lemma}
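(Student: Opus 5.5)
We prove \eqref{eq:distance-law} by direct counting: we count labeled trees on $[n]$ in which the path from $u$ to $v$ has exactly $d$ edges, and divide by Cayley's total $n^{n-2}$. The count has two stages.

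\textbf{Stage 1 (choose the path).} A path of length $d$ from $u$ to $v$ is a sequence $u=w_0,w_1,\ldots,w_d=v$ with distinct interior vertices drawn in order from the remaining $n-2$ labels. There are $(n-2)!/(n-d-1)!$ such sequences. Each sequence determines a forest $F$ on $[n]$ consisting of the path, which has $d+1$ vertices, together with $n-d-1$ isolated vertices. Hence $F$ has $k=n-d$ components, with sizes $d+1,1,\ldots,1$.

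\textbf{Stage 2 (complete the path).} Any tree that contains this path has $\dist_T(u,v)=d$, because paths in a tree are unique. Distinct sequences give distinct path edge sets, so no tree is counted twice. By Lemma~\ref{lem:forest-extension}, the number of trees containing $F$ is
\[
n^{k-2}\prod_i s_i=n^{n-d-2}(d+1).
\]
Multiplying the two stages gives $(d+1)(n-2)!\,n^{n-d-2}/(n-d-1)!$ trees. Dividing by $n^{n-2}$ yields \eqref{eq:distance-law}.

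\textbf{Edge cases.} At $d=1$ the formula gives $2/n$, which agrees with the fact that each of the $n-1$ edges of a uniform tree is one of the $\binom n2$ possible pairs. At $d=n-1$ the path is Hamiltonian and $k=1$; Lemma~\ref{lem:forest-extension} then returns $n^{-1}\cdot n=1$ completion, consistent with the formula.

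The only point needing care is that the $k=1$ case of Lemma~\ref{lem:forest-extension} is used with its convention $n^{-1}s_1=1$, which that lemma's proof already covers. No other obstacle arises. As an optional consistency check, the probabilities should sum to one over $1\le d\le n-1$.
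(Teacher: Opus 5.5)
Your proposal is correct and follows the paper's proof essentially step for step. Both count the $(n-2)!/(n-d-1)!$ ordered interior-vertex lists, apply Lemma~\ref{lem:forest-extension} to the path viewed as one component of size $d+1$ plus singletons to get $(d+1)n^{n-d-2}$ completions, use uniqueness of tree paths for disjointness, and divide by $n^{n-2}$, with the case $d=n-1$ handled through the $k=1$ convention.
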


\begin{proof}
The $u$--$v$ path in a tree is unique. Choose and order its $d-1$ internal
vertices in $(n-2)!/(n-d-1)!$ ways. Once one such path is fixed, regard it as
one forest component of size $d+1$ and every remaining vertex as a singleton.
Lemma~\ref{lem:forest-extension} shows that the number of Cayley trees
containing this path is $(d+1)n^{n-d-2}$. Every such tree has the selected path
as its unique $u$--$v$ path, so the counts for different ordered internal
vertex lists are disjoint. This reasoning also covers $d=n-1$: the fixed path
then spans all vertices and its unique completion count is
$(d+1)n^{-1}=1$. Dividing the product of the two counts by Cayley's total
$n^{n-2}$ gives \eqref{eq:distance-law}.
\end{proof}

\begin{theorem}[Exact finite-size formula]\label{thm:exact-formula}
For every integer $n\geq3$,
\begin{align}
G_n
&=\frac n4\left[
1+\sum_{d=1}^{n-1}
\frac{(n-1)!}{(n-d-1)!\,n^d}
\right]
\label{eq:exact-factorial}\\
&=\frac n4\sum_{d=0}^{n-1}
\prod_{j=1}^{d}\left(1-\frac jn\right),
\label{eq:exact-product}
\end{align}
where the empty product equals one.
\end{theorem}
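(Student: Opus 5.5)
The plan is to combine Theorem~\ref{thm:distance-identity} with the distance law of Lemma~\ref{lem:distance-law}, averaging over whether the two independent uniform vertices $U,V$ coincide. With probability $1/n$ we have $U=V$ and $D=0$, so $(D+1)^{-1}=1$. With probability $(n-1)/n$ the vertices are distinct. Conditional on the unordered pair $\{U,V\}$, the uniform Cayley tree is independent of that pair and its law is invariant under relabeling. The distance therefore has the law \eqref{eq:distance-law} for every fixed pair of distinct vertices. This gives
\[
\E\!\left[\frac{1}{D+1}\right]
=\frac1n+\frac{n-1}{n}\sum_{d=1}^{n-1}\frac{1}{d+1}\cdot\frac{(d+1)(n-2)!}{(n-d-1)!\,n^d}.
\]

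The factor $d+1$ in the distance law cancels the reciprocal $(d+1)^{-1}$. The remaining prefactor simplifies as $(n-1)(n-2)!=(n-1)!$. Multiplying by $n^2/4$ then gives
\[
G_n=\frac{n^2}{4}\cdot\frac1n\left[1+\sum_{d=1}^{n-1}\frac{(n-1)!}{(n-d-1)!\,n^d}\right],
\]
which is \eqref{eq:exact-factorial}.

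For \eqref{eq:exact-product}, rewrite each summand as a falling product:
\[
\frac{(n-1)!}{(n-d-1)!\,n^d}=\prod_{j=1}^{d}\frac{n-j}{n}=\prod_{j=1}^{d}\left(1-\frac jn\right).
\]
The leading $1$ is exactly the empty product, i.e. the $d=0$ term. This proves the product form.

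No step is a genuine obstacle; the argument is a direct substitution. The points needing care are bookkeeping:
\begin{itemize}
\item including the coincident case $U=V$ with weight $1/n$, since Lemma~\ref{lem:distance-law} covers only distinct fixed vertices;
\item justifying that the law for fixed distinct $u,v$ transfers to random distinct $U,V$, which follows from the independence of $(U,V)$ and $T$;
\item confirming that the range $1\le d\le n-1$ includes the endpoint $d=n-1$. As a sanity check, the probabilities $\frac{(d+1)(n-2)!}{(n-d-1)!\,n^d}$ over this range should sum to one.
\end{itemize}
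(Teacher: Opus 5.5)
Your proposal is correct and follows the paper's own proof essentially step for step. Like the paper, you split on the event $U=V$ (probability $1/n$, giving $D=0$), apply Lemma~\ref{lem:distance-law} to distinct vertices so that the factor $d+1$ cancels, substitute into Theorem~\ref{thm:distance-identity}, and rewrite $(n-1)!/((n-d-1)!\,n^d)$ as $\prod_{j=1}^{d}(1-j/n)$.
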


\begin{proof}
The event $U=V$ has probability $1/n$ and gives $D=0$. Conditional on
$U\neq V$, Lemma~\ref{lem:distance-law} applies. Hence
\[
\E\left[\frac1{D+1}\right]
=\frac1n+
\frac{n-1}{n}\sum_{d=1}^{n-1}
\frac{(n-2)!}{(n-d-1)!\,n^d}.
\]
Substitute this expression into \eqref{eq:distance-identity} to obtain
\eqref{eq:exact-factorial}. The identity
\[
\frac{(n-1)!}{(n-d-1)!\,n^d}
=\prod_{j=1}^d\left(1-\frac jn\right)
\]
gives \eqref{eq:exact-product}.
\end{proof}

\section{Asymptotic Growth}\label{sec:asymptotic}

We now extract the exact leading order from
\eqref{eq:exact-product}. Define
\[
a_{n,d}=\prod_{j=1}^{d}\left(1-\frac jn\right),
\qquad a_{n,0}=1.
\]

\begin{theorem}[Asymptotic greedy cost]\label{thm:asymptotic}
\begin{equation}
G_n\sim\frac14\sqrt{\frac{\pi}{2}}\,n^{3/2}.
\label{eq:asymptotic}
\end{equation}
\end{theorem}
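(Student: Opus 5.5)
By Theorem~\ref{thm:exact-formula}, $G_n=\frac n4 S_n$ with $S_n=\sum_{d=0}^{n-1}a_{n,d}$. It therefore suffices to prove
\[
S_n\sim\sqrt{\pi n/2}.
\]
This is the classical Ramanujan $Q$-function, and I would give a self-contained Laplace-type argument. Set $m=\lfloor n^{3/5}\rfloor$ and split $S_n$ into the ranges $d\le m$ and $d>m$.

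\textbf{Head, $d\le m$.} Write
\[
\log a_{n,d}=\sum_{j=1}^d\log(1-j/n).
\]
Use $-x-x^2\le\log(1-x)\le -x$ for $0\le x\le 1/2$, which holds since $j/n\le m/n\to0$. This gives
\[
-\frac{d(d+1)}{2n}-\frac{d^3}{n^2}\le\log a_{n,d}\le-\frac{d(d+1)}{2n}.
\]
For $d\le m$ we have $d^3/n^2\le n^{-1/5}\to0$ and $d/n\to0$. Hence, uniformly in $d\le m$,
\[
a_{n,d}=e^{-d^2/(2n)}(1+o(1)).
\]
So the head equals $(1+o(1))\sum_{d\le m}e^{-d^2/(2n)}$. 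This sum is a Riemann sum with mesh $1/\sqrt n$ for $\sqrt n\int_0^{m/\sqrt n}e^{-x^2/2}\,dx$. The integrand is monotone decreasing, so the error between sum and integral is at most the first term, namely $1$. Since $m/\sqrt n=n^{1/10}\to\infty$, the head is $\sqrt n\sqrt{\pi/2}\,(1+o(1))$.

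\textbf{Tail, $d>m$.} Use only $a_{n,d}\le e^{-d(d+1)/(2n)}$. Because $a_{n,d}$ is decreasing in $d$, the tail is at most $n\,a_{n,m}\le n\,e^{-m^2/(2n)}$. Since $m^2/(2n)\approx n^{1/5}/2$, this bound is $n e^{-n^{1/5}/2}=o(\sqrt n)$.

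Combining the two ranges gives $S_n\sim\sqrt{\pi n/2}$, and hence
\[
G_n\sim\frac14\sqrt{\frac\pi2}\,n^{3/2}.
\]

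The main obstacle is choosing the cutoff $m$. It must satisfy $m^3/n^2\to0$, so that the cubic error in the logarithm vanishes. It must also satisfy $m/\sqrt n\to\infty$, so that both the Gaussian integral and the tail bound close. Any exponent strictly between $1/2$ and $2/3$ works, which is why $3/5$ is used. The remaining steps are routine. As a sanity check, for $n=3$ the formula gives $S_3=1+\tfrac23+\tfrac29=\tfrac{17}{9}$, so $G_3=\tfrac{17}{12}$, matching the introduction.
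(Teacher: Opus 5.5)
Your proof is correct and takes essentially the same route as the paper. Both apply Laplace's method to the exact product formula: a Gaussian approximation of $a_{n,d}$ on a head range, plus the domination $a_{n,d}\le e^{-d(d+1)/(2n)}$ on the tail. The paper instead fixes the cutoff at $M\sqrt n$, bounds the tail by a Gaussian integral and then lets $M\to\infty$, whereas your growing cutoff $n^{3/5}$ yields a uniform relative error on the head and a superpolynomially small tail via monotonicity, so no second limit is needed.
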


\begin{proof}
Fix $M>0$. Uniformly for $0\leq d\leq M\sqrt n$, the Taylor estimate
$\log(1-u)=-u+O(u^2)$ gives
\begin{align}
\log a_{n,d}
&=-\frac1n\sum_{j=1}^d j
+O\left(\frac1{n^2}\sum_{j=1}^d j^2\right)\notag\\
&=-\frac{d(d+1)}{2n}+O_M(n^{-1/2}).
\label{eq:local-log}
\end{align}
For $d=\lfloor x\sqrt n\rfloor$, the first term on the right converges to
$-x^2/2$ uniformly for $x\in[0,M]$. Exponentiation on the resulting bounded
interval therefore gives the explicit compact convergence
\[
\sup_{0\leq x\leq M}
\left|a_{n,\lfloor x\sqrt n\rfloor}-e^{-x^2/2}\right|
\longrightarrow0.
\]
The Riemann-sum consequence is
\begin{equation}
\frac1{\sqrt n}\sum_{0\leq d\leq M\sqrt n}a_{n,d}
\longrightarrow \int_0^M e^{-x^2/2}\,dx.
\label{eq:compact-riemann}
\end{equation}

For every $d<n$, $\log(1-u)\leq-u$ gives
\begin{equation}
0\leq a_{n,d}
\leq \exp\left(-\frac{d(d+1)}{2n}\right)
\leq e^{-d^2/(2n)}.
\label{eq:gaussian-domination}
\end{equation}
Writing $m_n=\lfloor M\sqrt n\rfloor+1$ and using that the Gaussian density is
decreasing on $[0,\infty)$, \eqref{eq:gaussian-domination} yields
\begin{align*}
0&\leq\frac1{\sqrt n}\sum_{d=m_n}^{n-1}a_{n,d}
\leq\frac1{\sqrt n}\sum_{d=m_n}^{\infty}e^{-d^2/(2n)}\\
&\leq\frac1{\sqrt n}+
\int_{m_n/\sqrt n}^{\infty}e^{-x^2/2}\,dx.
\end{align*}
Thus the limsup of the scaled tail is at most
$\int_M^\infty e^{-x^2/2}\,dx$. Combining this inequality with
\eqref{eq:compact-riemann} gives
\begin{align*}
\limsup_{n\to\infty}
\left|\frac1{\sqrt n}\sum_{d=0}^{n-1}a_{n,d}
-\int_0^\infty e^{-x^2/2}\,dx\right|
\leq 2\int_M^\infty e^{-x^2/2}\,dx.
\end{align*}
The right-hand side tends to zero as $M\to\infty$, so
\[
\frac1{\sqrt n}\sum_{d=0}^{n-1}a_{n,d}
\longrightarrow
\int_0^\infty e^{-x^2/2}\,dx
=\sqrt{\frac{\pi}{2}}.
\]
Multiplication by $n/4$ proves \eqref{eq:asymptotic}.
\end{proof}

\begin{corollary}\label{cor:per-arrival}
\[
G_n=\Theta(n^{3/2}),
\qquad
\frac{G_n}{n}\sim
\frac14\sqrt{\frac{\pi}{2}}\,\sqrt n.
\]
\end{corollary}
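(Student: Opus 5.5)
The corollary follows from Theorem~\ref{thm:asymptotic}, and I would derive it in two short steps, starting with the per-arrival statement. By definition, $G_n\sim\frac14\sqrt{\pi/2}\,n^{3/2}$ means
\[
G_n\Big/\Bigl(\tfrac14\sqrt{\pi/2}\,n^{3/2}\Bigr)\to1 .
\]
Writing $G_n/n$ and $\frac14\sqrt{\pi/2}\,\sqrt n$ as the same two quantities each divided by $n$, their ratio is unchanged. Hence $G_n/n\sim\frac14\sqrt{\pi/2}\,\sqrt n$ is immediate.

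Next I would derive $G_n=\Theta(n^{3/2})$. The asymptotic equivalence gives constants $c_1,c_2>0$ and a threshold $n_0$ such that $c_1n^{3/2}\le G_n\le c_2 n^{3/2}$ for all $n\ge n_0$. For instance, one can take $c_1=\frac18\sqrt{\pi/2}$ and $c_2=\frac12\sqrt{\pi/2}$. This already gives the $\Theta$ statement in its eventual sense.

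If a bound valid for every $n\geq3$ is wanted, I would use the exact formula \eqref{eq:exact-product} directly.
\begin{itemize}
\item Upper bound: by \eqref{eq:gaussian-domination}, $\sum_{d}a_{n,d}\le 1+\int_0^\infty e^{-x^2/(2n)}dx=1+\sqrt{\pi n/2}$. This gives $G_n\le\frac n4(1+\sqrt{\pi n/2})$, which is at most a constant multiple of $n^{3/2}$ for every $n\ge1$.
\item Lower bound: for $d\le\sqrt n$, the bound $a_{n,d}\ge 1-\sum_{j\le d}j/n\ge 1/2$ holds, roughly. So at least about $\sqrt n$ terms are each at least $1/2$, giving $G_n\ge c\,n^{3/2}$ uniformly.
\end{itemize}
Nothing here is a genuine obstacle; the only care needed is the elementary lower bound on $a_{n,d}$ when stating the $\Theta$ bound uniformly rather than eventually.
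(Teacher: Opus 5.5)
Your argument is correct and matches the paper, which states this corollary without a separate proof as an immediate consequence of Theorem~\ref{thm:asymptotic}: divide by $n$ to get the per-arrival equivalence, and read off eventual two-sided bounds for the $\Theta$ claim. In your optional bounds valid for all $n\geq3$, the claim $a_{n,d}\ge 1/2$ needs $d\le\sqrt n-1$ rather than $d\le\sqrt n$ (for example, $a_{4,2}=3/8$). With that restriction there are still $\lfloor\sqrt n\rfloor$ such terms, so the uniform lower bound goes through.
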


The leading constant is
\[
\frac14\sqrt{\frac{\pi}{2}}\approx0.3133285343.
\]

\section{Stochastic Performance Benchmarks}\label{sec:benchmarks}

Let $\OPT_n^{\mathrm{off}}$ be the minimum total matching cost after all
$n$ arrivals are revealed, and let
\[
O_n=\E[\OPT_n^{\mathrm{off}}].
\]
The following result compares expected costs under the fixed stochastic model.
It is not a worst-case competitive-ratio statement.

\begin{theorem}[Offline lower bound]\label{thm:offline-lower}
\begin{equation}
\liminf_{n\to\infty}\frac{O_n}{n^{3/2}}
\geq\frac1{\sqrt{24\pi}}.
\label{eq:offline-lower}
\end{equation}
\end{theorem}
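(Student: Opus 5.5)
The plan is to use a single Lipschitz test function, via the easy direction of Kantorovich duality, and then apply the central limit theorem. Any perfect matching $\pi$ of the arrivals $X_1,\ldots,X_n$ to the facilities $\V$ satisfies, for every $1$-Lipschitz $f$ on the circle of length $n$,
\[
\sum_{i=1}^n d(X_i,\pi(X_i))\ \ge\ \sum_{i=1}^n\bigl(f(X_i)-f(\pi(X_i))\bigr)=\sum_{i=1}^n f(X_i)-\sum_{v\in\V}f(v).
\]
The same holds with $f$ replaced by $-f$. Hence $\OPT_n^{\mathrm{off}}\ge\bigl|\sum_i f(X_i)-\sum_v f(v)\bigr|$ pointwise, and so $O_n\ge\E\bigl|\sum_i f(X_i)-\sum_v f(v)\bigr|$.

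I would take $f(x)=\dist(x,0)$, the cycle distance to facility $0$. It is $1$-Lipschitz in the cycle metric, and it has the largest variance among natural tent-shaped choices. For a uniform point $X$, the value $f(X)$ is uniform on $[0,n/2]$. Therefore $\E f(X)=n/4$ and $\mathrm{Var} f(X)=(n/2)^2/12=n^2/48$.

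On the deterministic side, $\sum_{v}\delta(0,v)$ equals $n^2/4$ for even $n$ and $(n^2-1)/4$ for odd $n$. So
\[
\sum_v f(v)=n\,\E f(X)+O(1).
\]
Writing $S_n=\sum_i\bigl(f(X_i)-n/4\bigr)$, this gives
\[
O_n\ \ge\ \E|S_n|-O(1).
\]

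It remains to show that $\E|S_n|\sim\sqrt{2/\pi}\,\sigma_n\sqrt n$ with $\sigma_n=n/\sqrt{48}$. Normalize by setting $Y_{n,i}=(f(X_i)-n/4)/n$. These are i.i.d. copies of a fixed law, namely $U-1/4$ with $U$ uniform on $[0,1/2]$, whose variance is $1/48$. The standard CLT then gives $S_n/(n\sqrt n)\Rightarrow N(0,1/48)$. The second moments satisfy $\E[S_n^2]/n^3=1/48$ for every $n$, so $\{|S_n|/n^{3/2}\}$ is uniformly integrable. Convergence in distribution therefore upgrades to convergence of first absolute moments:
\[
\frac{\E|S_n|}{n^{3/2}}\longrightarrow\sqrt{\frac{2}{\pi}}\cdot\frac{1}{\sqrt{48}}=\frac{1}{\sqrt{24\pi}}.
\]
Dividing the bound $O_n\ge\E|S_n|-O(1)$ by $n^{3/2}$ and taking the liminf proves \eqref{eq:offline-lower}.

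I expect no serious obstacle. The step needing the most care is the bookkeeping. First, the facility-side sum must be accounted for, since it is only within $O(1)$ of $n\,\E f(X)$ and not equal to it. Second, the Lipschitz inequality must be applied with the cycle metric $d$ of \eqref{eq:point-distance}, so that $|f(x)-f(v)|\le d(x,v)$ holds for arrival points on edges and not only at vertices. The moment upgrade from the CLT is routine, because the scaled summands have a fixed bounded law.
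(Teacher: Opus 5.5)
Your proposal is correct and follows essentially the same route as the paper. The shared steps are the $1$-Lipschitz certificate $f(x)=d(x,0)$ with $f(X)$ uniform on $[0,n/2]$, the $O(1)$ discrepancy in the facility sum, and the central limit theorem upgraded to convergence of first absolute moments via the constant normalized second moment (uniform integrability), which gives $\sqrt{2/\pi}/\sqrt{48}=1/\sqrt{24\pi}$.
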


\begin{proof}
Fix vertex $0$ and set $f(x)=d(x,0)$. This function is $1$-Lipschitz on the
cycle. For any bijection $\sigma$ matching arrivals $X_i$ to the facility
vertices,
\begin{align}
\sum_{i=1}^n d(X_i,\sigma(i))
&\geq\sum_{i=1}^n|f(X_i)-f(\sigma(i))|\notag\\
&\geq\left|
\sum_{i=1}^n f(X_i)-\sum_{v\in\V}f(v)
\right|.
\label{eq:lipschitz-certificate}
\end{align}

For a uniform point $X$ on the length-$n$ cycle,
$f(X)$ is uniform on $[0,n/2]$. Thus
\[
\E f(X)=\frac n4,
\qquad
\operatorname{Var}(f(X))=\frac{n^2}{48}.
\]
The deterministic facility sum equals $n^2/4$ for even $n$ and
$(n^2-1)/4$ for odd $n$. Its difference from $n\E f(X)=n^2/4$ is therefore
bounded by $1/4$.

Let
\[
Z_n=\frac{\sum_{i=1}^n(f(X_i)-n/4)}
{n^{3/2}/\sqrt{48}}.
\]
After multiplication by $\sqrt{48}/n$, each centered summand has a
distribution independent of $n$, with mean zero and variance one. The
classical central limit theorem therefore gives $Z_n\Rightarrow Z$, where
$Z$ is standard normal. Moreover, $\E[Z_n^2]=1$, so
$\{|Z_n|\}$ is uniformly integrable. The bounded centering discrepancy above
is $o(n^{3/2})$, and hence
\[
\E\left|
\sum_{i=1}^n f(X_i)-\sum_{v\in\V}f(v)
\right|
=\frac{n^{3/2}}{\sqrt{48}}
\left(\E|Z|+o(1)\right).
\]
Since $\E|Z|=\sqrt{2/\pi}$, taking the minimum over $\sigma$ in
\eqref{eq:lipschitz-certificate} and then expectation proves
\eqref{eq:offline-lower}.
\end{proof}

\begin{corollary}[Stochastic ratio of expectations]\label{cor:stochastic-ratio}
\begin{equation}
\limsup_{n\to\infty}\frac{G_n}{O_n}
\leq\frac{\pi\sqrt3}{2}.
\label{eq:stochastic-ratio}
\end{equation}
If $V_n^\star$ is the expected cost of the optimal online policy under the
same arrival law, then
\begin{equation}
\limsup_{n\to\infty}\frac{G_n}{V_n^\star}
\leq\frac{\pi\sqrt3}{2}.
\label{eq:online-ratio}
\end{equation}
\end{corollary}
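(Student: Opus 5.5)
The plan is to combine Theorem~\ref{thm:asymptotic} with Theorem~\ref{thm:offline-lower} through a limsup/liminf quotient bound, and then to pass from the offline benchmark to the online one by a pathwise domination.

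For \eqref{eq:stochastic-ratio}, write
\[
\frac{G_n}{O_n}=\frac{G_n/n^{3/2}}{O_n/n^{3/2}}.
\]
By Theorem~\ref{thm:asymptotic}, the numerator converges to $\frac14\sqrt{\pi/2}$. By Theorem~\ref{thm:offline-lower}, the denominator has liminf at least $1/\sqrt{24\pi}>0$; in particular $O_n>0$ for all large $n$. Hence
\[
\limsup_{n\to\infty}\frac{G_n}{O_n}
\leq\frac{\lim_n G_n/n^{3/2}}{\liminf_n O_n/n^{3/2}}
\leq\frac14\sqrt{\frac\pi2}\cdot\sqrt{24\pi}.
\]
It remains to simplify the constant. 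Since $\sqrt{24\pi}\sqrt{\pi/2}=\sqrt{12}\,\pi=2\sqrt3\,\pi$, the bound equals $\frac14\cdot 2\sqrt3\,\pi=\pi\sqrt3/2$.

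For \eqref{eq:online-ratio}, the key observation is that every online policy, including an optimal one, produces a bijection between the $n$ arrivals and the $n$ facilities. Its realized cost is therefore pathwise at least $\OPT_n^{\mathrm{off}}$. This holds for randomized or history-dependent policies as well, since each realization is still a perfect matching. Taking expectations gives $V_n^\star\geq O_n$, so
\[
\frac{G_n}{V_n^\star}\leq\frac{G_n}{O_n},
\]
and the first bound transfers directly.

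The only point needing care is existence and measurability of $V_n^\star$ as an infimum over policies. I would sidestep this by noting that every admissible policy has expected cost at least $O_n$, so the infimum $V_n^\star$ also satisfies $V_n^\star\geq O_n$. Whether that infimum is actually attained is irrelevant to the inequality. No step appears to be a genuine obstacle; the main thing to get right is the constant arithmetic.
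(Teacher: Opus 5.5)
Your proposal is correct and matches the paper's proof. Both divide by $n^{3/2}$, combine the limit from Theorem~\ref{thm:asymptotic} with the positive liminf from Theorem~\ref{thm:offline-lower}, simplify the constant to $\pi\sqrt3/2$, and then transfer the bound to $V_n^\star$ using $O_n\leq V_n^\star$; your pathwise justification of that inequality (every online run yields a perfect matching) is exactly what underlies the paper's one-line use of it.
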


\begin{proof}
Theorem~\ref{thm:asymptotic} and the positive liminf bound in
Theorem~\ref{thm:offline-lower} imply
\[
\limsup_{n\to\infty}\frac{G_n}{O_n}
\leq
\frac{\frac14\sqrt{\pi/2}}{1/\sqrt{24\pi}}.
\]
The constants simplify as
\[
\frac{\frac14\sqrt{\pi/2}}{1/\sqrt{24\pi}}
=\frac{\pi\sqrt3}{2}.
\]
Finally, $O_n\leq V_n^\star\leq G_n$, so replacing $O_n$ by the larger
denominator $V_n^\star$ preserves the upper bound.
\end{proof}

\section{Optimal Stochastic Online Assignment}\label{sec:optimal-online}

Greedy minimizes only the current distance. To quantify this restriction, let
$V^\star(S)$ be the minimum expected remaining cost over all online policies
from occupied set $S$.

\begin{proposition}[Optimal-online Bellman equation]\label{prop:bellman}
\begin{equation}
V^\star(S)=\frac1n\sum_{e=0}^{n-1}\int_0^1
\min_{v\notin S}
\left\{d(e,t;v)+V^\star(S\cup\{v\})\right\}\,dt,
\label{eq:bellman}
\end{equation}
with $V^\star(\V)=0$.
\end{proposition}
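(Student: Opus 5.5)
The plan is to treat the problem as a finite-horizon Markov decision process whose state is the occupied set $S\subseteq\V$ and to prove \eqref{eq:bellman} by backward induction on the number $n-|S|$ of free facilities. I would \emph{define} $V^\star(S)$ as the infimum, over all online policies, of the expected remaining cost started from $S$ with $n-|S|$ arrivals still to come. An online policy here is a sequence of measurable maps from the observed history (past arrivals, past assignments, current arrival) to a free facility. The base case is immediate: if $S=\V$, no arrivals remain and $V^\star(\V)=0$.

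For the inductive step, fix $S\neq\V$ and let $W(S)$ denote the right-hand side of \eqref{eq:bellman}, computed from the already-established values $V^\star(S\cup\{v\})$. The next arrival is a uniform point $(e,t)$, and future arrivals are independent of it and of the past. So, conditional on the first arrival and the first decision $v$, the remaining problem is an independent copy of the problem from $S\cup\{v\}$.

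\emph{Lower bound.} Take any policy and condition on the first arrival. It pays $d(e,t;v)$ plus a continuation cost. Conditional on the first step, the continuation is itself an online policy from $S\cup\{v\}$, possibly depending on $(e,t)$, which is harmless by independence. Its expected cost is therefore at least $V^\star(S\cup\{v\})$, and hence at least $\min_{v\notin S}\{\cdot\}$. Integrating over $(e,t)$ gives cost $\geq W(S)$.

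\emph{Upper bound.} Let the policy choose, at $(e,t)$, a minimizer $v^\ast(e,t)$ of the bracket, and then follow an $\varepsilon$-optimal (in fact optimal) policy from the resulting state. Such continuation policies exist by induction, since there are only finitely many states.

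The one technical point, which I expect to be the main, though mild, obstacle, is measurability. I must check that the integrand is measurable and that a measurable minimizing selection exists. For each fixed $v$, the map $t\mapsto d(e,t;v)$ is continuous and piecewise linear by \eqref{eq:point-distance}, and there are finitely many $v\notin S$. The minimum is therefore continuous, and each edge splits into finitely many intervals on which one fixed $v$ attains it. Taking the smallest-index minimizer then gives a piecewise-constant, hence measurable, selection. Ties occur only on a finite set and do not affect the integral.

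With this selection, the expected cost of the constructed policy equals $W(S)$. Combined with the lower bound, $V^\star(S)=W(S)$, the infimum is attained, and the induction closes. The recursion also shows that the optimal online policy is a Markov policy in $S$, which is what the later exact rational computations for $n=3,4$ would use.
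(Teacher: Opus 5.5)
Your proposal is correct and follows the same route as the paper's proof. Both argue by backward induction on the number of free facilities, take the occupied set $S$ as the state, minimize immediate cost plus continuation value pointwise over the finite action set $\V\setminus S$, and then average over the uniform arrival. Your version also spells out two inequalities that the paper compresses into ``sufficient state'': any policy's continuation from $S\cup\{v\}$ costs at least $V^\star(S\cup\{v\})$, and a measurable minimizing selection followed by optimal continuations attains that bound. The measurability check, which the paper leaves implicit, is likewise made explicit in your argument.
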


\begin{proof}
The future arrivals are independent of the entire observed history and have
the same uniform law. Capacities are unit and costs are additive, so the
occupied set $S$ is a sufficient state: conditional on $S$, no other feature
of the history changes the feasible actions or the future distribution. After
observing the current point $(e,t)$, the finite action set is
$\V\setminus S$, and choosing $v$ incurs immediate cost $d(e,t;v)$ followed by
optimal continuation value $V^\star(S\cup\{v\})$. Pointwise minimization over
that finite set and then averaging over the uniform arrival gives
\eqref{eq:bellman}. Since every action increases $|S|$ by one, backward
induction from $V^\star(\V)=0$ proves existence and optimality of the recursion.
\end{proof}

The integral in \eqref{eq:bellman} has a particularly simple exact form.
Define
\begin{align}
A_e(S)&=\min_{v\notin S}
\{\delta(e,v)+V^\star(S\cup\{v\})\},\notag\\
B_e(S)&=\min_{v\notin S}
\{\delta(e+1,v)+V^\star(S\cup\{v\})\}.
\label{eq:AB-values}
\end{align}
For $c_v=V^\star(S\cup\{v\})$, \eqref{eq:point-distance} and the elementary
finite-set identity
\[
\min_v\min\{a_v+c_v,b_v+c_v\}
=\min\{\min_v(a_v+c_v),\min_v(b_v+c_v)\}
\]
give
\[
\min_{v\notin S}\{d(e,t;v)+V^\star(S\cup\{v\})\}
=\min\{A_e(S)+t,\,B_e(S)+1-t\}.
\]
For constants $A,B$, put
\[
\tau=\frac{1+B-A}{2}
\]
and define
\begin{equation}
h(A,B)=\int_0^1\min\{A+t,B+1-t\}\,dt.
\label{eq:h-definition}
\end{equation}
Direct integration gives
\begin{equation}
h(A,B)=
\begin{cases}
B+\frac12, & \tau\leq0,\\[2mm]
A\tau+\tau^2+(B+1)(1-\tau)-\frac12,
&0<\tau<1,\\[2mm]
A+\frac12, & \tau\geq1.
\end{cases}
\label{eq:h-closed}
\end{equation}
Consequently,
\begin{equation}
V^\star(S)=\frac1n\sum_{e=0}^{n-1}
h(A_e(S),B_e(S)).
\label{eq:rational-bellman}
\end{equation}
Starting from $V^\star(\V)=0$, every quantity in
\eqref{eq:rational-bellman} is rational. Thus the optimal values for small
$n$ can be evaluated without numerical quadrature.

\begin{proposition}[The triangle]\label{prop:n3-optimal}
For $n=3$, nearest-available greedy is optimal from the initial state, and
\[
V_3^\star=G_3=\frac{17}{12}.
\]
\end{proposition}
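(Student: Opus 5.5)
The plan is to run the Bellman recursion of Proposition~\ref{prop:bellman} backward from $V^\star(\V)=0$ and to show that at every level the continuation values $V^\star(S\cup\{v\})$ do not depend on the action $v$. Once that holds, the pointwise minimization in \eqref{eq:bellman} reduces to minimizing $d(e,t;v)$ alone. Minimizing $d(e,t;v)$ pointwise is exactly the nearest-available greedy rule, so greedy attains $V^\star$ at every state and $V_3^\star=G_3$.

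The key structural observation is special to $n=3$. The rotation group of $C_3$ acts transitively on the subsets of $\V$ of size $1$ and on those of size $2$. The distance $d(e,t;v)$ and the uniform arrival law are rotation invariant, so $V^\star$ is constant on rotation orbits. Hence $V^\star(S)$ depends only on $|S|$. For each state $S$, every action $v\notin S$ leads to a state $S\cup\{v\}$ of the same size $|S|+1$. The continuation term is therefore a constant $c_{|S|+1}$, and it factors out of the minimum in \eqref{eq:bellman}:
\[
V^\star(S)=\frac13\sum_{e=0}^{2}\int_0^1\min_{v\notin S}d(e,t;v)\,dt+c_{|S|+1}.
\]
The integral term is the expected greedy one-step cost. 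Theorem~\ref{thm:gap-cost} evaluates it as $Q/(4n)=Q/12$, where $Q$ is the sum of squared gaps of the free set. By induction on $3-|S|$, greedy is optimal from every state.

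Then I would evaluate the three levels explicitly.
\begin{itemize}
\item With one free facility, the single gap has length $3$, so the cost is $9/12=3/4$. This gives $c_2=3/4$.
\item With two free facilities, the gaps are $1$ and $2$, so the one-step cost is $5/12$. This gives $c_1=5/12+3/4=7/6$.
\item From the empty state, the three unit gaps give cost $3/12=1/4$. This gives $V_3^\star=1/4+7/6=17/12$.
\end{itemize}

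As an independent check, the exact formula \eqref{eq:exact-product} with $n=3$ gives
\[
G_3=\tfrac34\bigl(1+\tfrac23+\tfrac29\bigr)=\tfrac34\cdot\tfrac{17}{9}=\tfrac{17}{12}.
\]
This agrees with the Bellman value.

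The only genuinely substantive step is the symmetry claim that $V^\star$ is constant on subsets of equal size. I would verify it by noting that if $\rho$ is a rotation, then $\rho$ composed with any policy from $S$ is a policy from $\rho S$ with the same expected cost. The claim then reduces to checking that $C_3$ has a single rotation orbit in each of these size classes. This is exactly what fails for $n=4$, where adjacent and opposite occupied pairs are inequivalent, and it explains why the argument stops at the triangle.
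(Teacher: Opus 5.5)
Your proof is correct. It shares the paper's starting point: rotational symmetry makes $V^\star$ depend only on $|S|$.

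Where you differ is in how optimality is deduced. The paper computes the orbit values $3/4,\,7/6,\,17/12$ as a rational Bellman certificate, using the edge integrals $h(A_e,B_e)$ of \eqref{eq:rational-bellman} (for example $5/4,1,5/4$ at $S=\{0\}$). It then concludes that greedy is optimal because this value matches $G_3=17/12$, which Theorem~\ref{thm:exact-formula} supplies independently.

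You instead note that every action from $S$ leads to a state of size $|S|+1$. The continuation value is therefore action-independent and factors out of the minimum, so the Bellman minimizer is pointwise the nearest-free action. Backward induction then gives equality of greedy's value and $V^\star$ at every state. You also take the one-step costs $Q/12$ from Theorem~\ref{thm:gap-cost} rather than from the closed form for $h$.

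Your route has three advantages:
\begin{itemize}
\item it proves optimality from every state, not only the initial one;
\item it does not depend on the coalescent and Cayley-tree machinery behind Theorem~\ref{thm:exact-formula}, so the product formula becomes only a consistency check;
\item it explains structurally why $n=3$ is special.
\end{itemize}

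The paper's route buys a uniform certificate format that carries over unchanged to $n=4$. There the continuation values $V^\star(\{0,1\})=13/8$ and $V^\star(\{0,2\})=3/2$ differ, so your factoring step is unavailable.
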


\begin{proof}
By rotational symmetry, the Bellman values depend only on the number of
occupied vertices. Starting at the terminal value zero, the three orbit
values with two, one, and zero occupied vertices are respectively
\[
\frac34,\qquad \frac76,\qquad \frac{17}{12}.
\]
For example, at $S=\{0\}$ the three edge integrals from
\eqref{eq:rational-bellman} are $5/4,1,5/4$, whose average is $7/6$; at the
empty state every edge integral is $17/12$. These substitutions constitute an
exact Bellman certificate. Theorem~\ref{thm:exact-formula} independently gives
$G_3=17/12$. Since greedy attains the Bellman lower value, it is optimal from
the initial state.
\end{proof}

\begin{proposition}[Greedy is not optimal]\label{prop:not-optimal}
For $n=4$,
\[
V_4^\star=\frac{1135}{512},
\qquad
G_4=\frac{71}{32}=\frac{1136}{512}.
\]
Hence greedy exceeds the optimal online expected cost by $1/512$.
\end{proposition}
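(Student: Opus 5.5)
The statement has two parts. The value $G_4=71/32$ comes directly from Theorem~\ref{thm:exact-formula}. Indeed, $\frac44\bigl[1+\frac34+\frac34\cdot\frac12+\frac34\cdot\frac12\cdot\frac14\bigr]=1+\frac34+\frac38+\frac3{32}=\frac{71}{32}$. The substance is therefore the exact value $V_4^\star=1135/512$, which I would obtain by backward induction with the rational Bellman form \eqref{eq:rational-bellman}. I would organize the computation by orbits of occupied sets under the dihedral group of the square, since $V^\star$ is invariant under rotations and reflections. There are six orbits:
\begin{itemize}
\item the empty set;
\item a single vertex;
\item an adjacent pair $\{0,1\}$;
\item an antipodal pair $\{0,2\}$;
\item a triple;
\item the full set.
\end{itemize}

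The levels proceed as follows.

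\emph{One free vertex.} The value is the one-gap cost $n^2/(4n)=1$ from Theorem~\ref{thm:gap-cost}, since there is no choice to make.

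\emph{Two free vertices.} There is a genuine decision here, so I would compute $A_e,B_e$ from \eqref{eq:AB-values} for each of the four edges and apply \eqref{eq:h-closed}. The continuation value is $1$ whichever free vertex is taken, so the optimum coincides with greedy at this level. For the antipodal free pair, gaps $(2,2)$ give $8/16=1/2$, so the state value is $3/2$. For the adjacent free pair, gaps $(1,3)$ give $10/16=5/8$, so the state value is $13/8$.

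\emph{Three free vertices (state $S=\{0\}$).} Here the continuation values differ. Taking a neighbour of $0$ leads to an adjacent occupied pair, hence an adjacent free pair, worth $13/8$. Taking the antipode leads to an antipodal occupied pair, worth $3/2$. The optimal policy may therefore deviate from greedy near the Voronoi boundaries, sending an arrival slightly farther to reach the antipode and so leave a balanced configuration. I would compute $A_e,B_e$ edge by edge and evaluate $h$ with the interior branch $0<\tau<1$. This gives $V^\star(\{0\})$ as a rational number strictly below the greedy value from this state.

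\emph{Empty state.} Every action leads to the single-vertex orbit, so all continuations are equal and $V^\star(\emptyset)=\frac12+V^\star(\{0\})$, using the cost $4/16$ of unit gaps. Hence $V^\star(\{0\})$ must equal $1135/512-256/512=879/512$. The greedy value from this state is $880/512$.

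\emph{Main obstacle.} The hard step is the three-free level. The argmin in $A_e,B_e$ must be taken correctly with the continuation offsets $13/8$ versus $3/2$. On the edges adjacent to the occupied vertex, the threshold $\tau$ produces the quadratic term $\tau^2$. The denominator $512=2^9$ should arise from $\tau$ having denominator $16$ combined with the $1/4$ average over edges and the factor $1/2$. I would verify this edge by edge, checking the result against the greedy value $880/512$ recomputed with the same edge decomposition. The difference $1/512$ should emerge as a sum of small triangle-area terms $(\Delta/2)^2$ from the shifted boundaries.
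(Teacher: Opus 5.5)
Your route is the paper's: backward Bellman evaluation over the dihedral orbits, with the only non-greedy decision at $S=\{0\}$. Your values $1$, $3/2$, $13/8$ and $G_4=71/32$ are correct. But the proposal does not prove $V_4^\star=1135/512$. The one level carrying all the content, $S=\{0\}$, is never evaluated. Instead you back-solve $V^\star(\{0\})$ from the claimed answer, which is circular.

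The back-solving is also wrong. With four free facilities the gaps are $(1,1,1,1)$, so the immediate cost is $4/16=1/4$ (as you yourself write), not $1/2$. Hence $V^\star(\emptyset)=\tfrac14+V^\star(\{0\})$, and the target is $V^\star(\{0\})=1007/512$. The greedy value from $\{0\}$ is $1008/512=63/32$: gaps $(1,1,2)$ give $6/16$, plus $\tfrac34\cdot\tfrac{13}{8}+\tfrac14\cdot\tfrac32$. Your $879/512$ and $880/512$ are each off by $1/4$, so the consistency check you propose would fail rather than confirm the result.

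You also place the deviation on the wrong edges. On the edges incident to the occupied vertex $0$ there is none. For $e=0$ one gets $A_0=21/8$, $B_0=13/8$, $\tau=0$, $h=17/8$; symmetrically $e=3$ gives $\tau=1$, $h=17/8$. The interior branch occurs on $e=1$ and $e=2$, the edges incident to the antipode $2$. There $(A_e,B_e)=(13/8,3/2)$ and $(3/2,13/8)$, with $\tau=7/16$ and $9/16$, and $h=463/256$ in each case.

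Averaging gives $V^\star(\{0\})=\tfrac12\bigl(17/8+463/256\bigr)=1007/512$. At the empty state all continuations are equal, so $A_e=B_e$, $\tau=1/2$ and $h=A_e+1/4$. This yields $V^\star(\emptyset)=1007/512+128/512=1135/512$.

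Your triangle heuristic is right. With $\Delta=13/8-3/2=1/8$, each of the two discrepant intervals contributes $(\Delta/2)^2=1/256$, and averaging over the four edges gives $1/512$. Supplying these explicit substitutions, which is what the paper's certificate table does, closes the argument.
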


\begin{proof}
Equation~\eqref{eq:rational-bellman}, evaluated backward on the dihedral
state orbits of the square, gives the following orbit values.
\begin{center}
\begin{tabular}{c c c}
\toprule
Occupied facilities & Representative & $V^\star(S)$\\
\midrule
4 & $\{0,1,2,3\}$ & $0$\\
3 & $\{0,1,2\}$ & $1$\\
2, opposite & $\{0,2\}$ & $3/2$\\
2, adjacent & $\{0,1\}$ & $13/8$\\
1 & $\{0\}$ & $1007/512$\\
0 & $\emptyset$ & $1135/512$\\
\bottomrule
\end{tabular}
\end{center}
\ref{app:n4-certificate} gives every
$A_e(S),B_e(S),\tau$, and $h(A_e(S),B_e(S))$ substitution for these orbit
representatives, so the displayed fractions can be checked without trusting
software.

The source of strict improvement is also explicit. At state $S=\{0\}$, on
edge $e=1$ the two relevant cost-plus-continuation lines cross at
$\tau=7/16$, while nearest-available greedy switches facilities at $t=1/2$.
On edge $e=2$ the corresponding thresholds are $9/16$ and $1/2$. On each of
the two discrepant intervals the integrated improvement is the area of a
triangle of base $1/16$ and height $1/8$, namely $1/256$. Averaging the two
improvements over the four equally likely edges gives $1/512$. At the empty
state all continuation values after the first assignment are equal by
rotation, so the Bellman action coincides almost everywhere with greedy and
the same $1/512$ improvement propagates to the initial value.
Theorem~\ref{thm:exact-formula} gives $G_4=71/32=1136/512$, while the exact
certificate gives $V_4^\star=1135/512$.
\end{proof}

Further exact values for $3\leq n\leq7$ appear in
\ref{app:validation}. They do not establish asymptotic optimality.

\section{Computation and reproducibility}\label{sec:computation}

The exact mean can be evaluated by
\[
a_{n,0}=1,\qquad a_{n,d}=a_{n,d-1}(1-d/n),\qquad
G_n=\frac n4\sum_{d=0}^{n-1}a_{n,d}.
\]
This needs a linear number of scalar steps. Exact rational evaluation has
additional bit costs. The gap and occupied-state recurrences in
\ref{app:recurrences} provide independent computations for small
instances. The supplementary verifier checks their agreement with the
product formula for $3\leq n\leq8$, the rational Bellman calculations, the
Cayley distance law through $n=6$, and the integrated connectivity identity
through $n=5$ by exhaustive enumeration.

Supplementary Material~1, \texttt{TCS\_supplement.zip}, contains the
standard-library Python verifier \texttt{code/verify\_ofa.py}, generated CSV
tables, LaTeX table fragments, and reproduction instructions. The deterministic
checks run with \texttt{python code/verify\_ofa.py --test}. Numerical tables
and simulation checks are in \ref{app:validation}, including the
correction to the earlier large-instance estimates. The exact formula
and asymptotic theorem do not depend on simulation.

\section{Discussion}\label{sec:discussion}

The gap representation explains the growth mechanism directly. The immediate
cost is proportional to the sum of squared gaps. Under the additive
coalescent, a few large clusters emerge, causing this second moment to be much
larger than its balanced lower bound $n^2/k$. The Cayley-tree representation
integrates that susceptibility over the complete process and reveals the
$\sqrt n$ scale of a typical path distance that determines the final
$n^{3/2}$ cost.

The stochastic comparison must be interpreted carefully. Corollary
\ref{cor:stochastic-ratio} is an asymptotic ratio-of-expectations guarantee
under the specified model of independent uniform arrivals. It is not a
sequence-by-sequence competitive ratio and says nothing about adversarial
inputs. Similarly, the small values in
\ref{app:validation} show that greedy is close to optimal for those
instances but do not establish $G_n/V_n^\star\to1$.

The analysis relies on four defining structural assumptions of the model:
equal facility spacing, unit capacities, uniform independent arrivals, and
cycle distance. With nonuniform arrival density, the one-step gap cost and transition weights
change. With heterogeneous capacities, removing a facility is no longer a
single coalescence event. These variants require separate analysis and are not
covered by the present theorem.

\subsection{Open Problems}

The present analysis leaves several natural questions concerning stochastic
online assignment on cycle metrics.

\begin{enumerate}
\item \emph{Asymptotics of the optimal online policy.}
Determine the first-order asymptotic behavior of $V_n^\star$, the expected cost
of the optimal stochastic online policy. In particular, does
\[
\lim_{n\to\infty}\frac{G_n}{V_n^\star}
\]
exist, and, if so, what is its value? The exact computations for small $n$
reported here do not resolve this question. A related algorithmic question is whether the cycle structure permits a polynomial-time exact computation of $V_n^\star$, a compact structural
recurrence, or an approximation scheme with a rigorous error guarantee.

\item \emph{The sharp offline comparison.}
Let $O_n=\E[\OPT_n^{\mathrm{off}}]$. Theorem~\ref{thm:offline-lower}, together
with $O_n\leq G_n$, determines the order $n^{3/2}$ but not the leading constant
for $O_n$. It remains to determine whether $O_n/n^{3/2}$ converges and, if it
does, to identify its limit. This would also determine the sharp asymptotic
ratio between greedy and the expected offline optimum.

\item \emph{Beyond the uniform, equally spaced unit-capacity model.}
For nonuniform arrival densities, the one-step gap cost and merger
probabilities are no longer functions only of integer gap sizes. Unequal
facility spacing introduces additional geometric information, while
heterogeneous capacities prevent every assignment from producing a single
merger. It would be useful to identify broader model classes admitting a
finite-dimensional Markov or coalescent representation and to determine their
corresponding greedy-cost asymptotics.

\item \emph{Other metric spaces.}
A further question is whether related coalescent or random-tree
representations can be obtained for paths, trees, or other highly symmetric
metrics. Boundary effects on a path and branching on a tree destroy the cyclic
exchangeability used here, so new state descriptions would be required.

\end{enumerate}

\section{Conclusion}\label{sec:conclusion}

We obtained an exact and asymptotic average-case analysis of nearest-available
greedy OFA on a cycle. The free-facility gaps form an adjacent size-biased
merging process; after an exchangeability argument, their unordered masses are
the discrete additive coalescent. A uniform-Cayley-tree representation then
reduces the total expected assignment cost to a two-vertex distance functional
and yields the exact formula
\[
G_n=\frac n4\sum_{d=0}^{n-1}
\prod_{j=1}^{d}\left(1-\frac jn\right).
\]
Its rigorous asymptotic evaluation gives
$G_n\sim\frac14\sqrt{\pi/2}\,n^{3/2}$.

Under the specified stochastic model, the offline comparison is precisely the
asymptotic ratio-of-expectations guarantee
\[
\limsup_{n\to\infty}
\frac{G_n}{\E[\OPT_n^{\mathrm{off}}]}
\leq\frac{\pi\sqrt3}{2}.
\]
We also computed the optimal stochastic online policy for small cycles. The
square already separates the two policies: greedy pays exactly $1/512$ more in
expectation.

Open problems include sharper stochastic comparison constants, asymptotics for
the optimal online policy, nonuniform arrival distributions, and extensions
with heterogeneous capacities. 

\appendix
\numberwithin{equation}{section}
\numberwithin{table}{section}
\numberwithin{figure}{section}
\renewcommand{\theequation}{\Alph{section}.\arabic{equation}}
\renewcommand{\thetable}{\Alph{section}.\arabic{table}}
\renewcommand{\thefigure}{\Alph{section}.\arabic{figure}}
\section{Exact recurrences for independent validation}\label{app:recurrences}
\subsection{Occupied-state recurrence}

For completeness, we record the occupied-state recurrence used in the
earlier study \citep{riad2026isco}. Let
\[
N_S(e,t)=\arg\min_{v\in F(S)}d(e,t;v)
\]
and let $J(S)$ be the expected remaining greedy cost.

\begin{proposition}[Occupied-state recurrence]\label{prop:occupied-recurrence}
For $S\neq\V$,
\begin{equation}
J(S)=\frac1n\sum_{e=0}^{n-1}\int_0^1
\frac1{|N_S(e,t)|}\sum_{v\in N_S(e,t)}
\bigl(d(e,t;v)+J(S\cup\{v\})\bigr)\,dt,
\label{eq:occupied-recurrence}
\end{equation}
with $J(\V)=0$. In particular, $G_n=J(\emptyset)$.
\end{proposition}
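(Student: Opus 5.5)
The argument is a first-step decomposition for the fixed Markov policy, with the occupied set $S$ as the state. Let $X$ be the next arrival and $v$ the facility greedy chooses. Write $J(S)$ for the expected remaining cost of the $n-|S|$ remaining greedy assignments started from $S$. The future arrivals are independent of $X$ and of the past. The greedy rule depends only on the current occupied set and the current arrival, plus the independent uniform tie-break. Hence, conditional on $X$ and on the choice $v$, the remaining process is greedy started from $S\cup\{v\}$ with fresh i.i.d. uniform arrivals. By the strong Markov property, or simply by conditioning on the first arrival and the tie-break coin, we get
\[
J(S)=\E\bigl[d(X,v)+J(S\cup\{v\})\bigr].
\]

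Next I would evaluate this expectation using the parameterization of Section~\ref{sec:model}. The edge $e$ is uniform over the $n$ edges, which gives the factor $1/n$ and the sum over $e$. Given $e$, the position $t$ is uniform on $[0,1]$, which gives the integral over $t$. Given $(e,t)$, the tie-break is uniform on $N_S(e,t)$, which gives the inner average $\frac1{|N_S(e,t)|}\sum_{v\in N_S(e,t)}$. Substituting these yields \eqref{eq:occupied-recurrence}.

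Some measurability and integrability checks are needed, but they are routine. For fixed $e$ and $v$, the function $t\mapsto d(e,t;v)$ is piecewise linear and bounded by $n$. The set $N_S(e,t)$ is constant on finitely many open subintervals of $[0,1]$ and can change only at finitely many breakpoints. So the integrand is a bounded piecewise-continuous function, and the convention at the breakpoints does not affect the integral. This matches the remark in Section~\ref{sec:model} that tie-breaking does not matter for expectations.

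For the boundary and base cases, if $S=\V$ no arrivals remain, so $J(\V)=0$. Since $|S|$ strictly increases at every step, the recursion is well-founded and backward induction on $n-|S|$ determines $J$ uniquely. At the initial state $S=\emptyset$ there are exactly $n$ arrivals to come, so $G_n=J(\emptyset)$ by definition. No step is a genuine obstacle. The only point needing care is justifying the conditional independence behind the Markov property, which holds because arrivals are i.i.d. and the tie-break randomization is drawn independently at each step.
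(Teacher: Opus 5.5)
Your proposal is correct and follows the paper's argument: condition on the next arrival point and its tie-break outcome, take the immediate cost $d(e,t;v)$ and successor state $S\cup\{v\}$, average over the uniform $(e,t)$ and the uniform tie-break, and use the strict increase of $|S|$ for backward induction from $J(\V)=0$. Your additional remarks on measurability and on the finiteness of tie sets fill in details the paper leaves implicit. Note also that only the ordinary one-step conditioning at a fixed step is needed, not the strong Markov property.
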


\begin{proof}
Condition on the next arrival point and its tie-breaking outcome. The immediate
cost is $d(e,t;v)$ and the successor state is $S\cup\{v\}$. Averaging gives
\eqref{eq:occupied-recurrence}. Each transition increases $|S|$, so backward
induction gives existence and uniqueness.
\end{proof}

Rotations and reflections of $C_n$ preserve the distance, arrival law, and
greedy rule. Hence $J(S)=J(\gamma S)$ for every dihedral symmetry $\gamma$.
This permits exact dynamic programming over one representative of each
rotation-reflection orbit. It is an exact computational acceleration, but the
gap process in Section~\ref{sec:gaps} is the main analytical representation.

\subsection{Gap-state recurrence}

Let $M_i\mathbf g$ denote the cyclic vector after this merger and let
$H(\mathbf g)$ be the expected remaining greedy cost from gap state
$\mathbf g$.

\begin{corollary}[Gap recurrence]\label{cor:gap-recurrence}
For $k\geq2$,
\begin{equation}
H(\mathbf g)=\frac{1}{4n}\sum_{j=1}^k g_j^2
+\sum_{i=1}^k\frac{g_{i-1}+g_i}{2n}H(M_i\mathbf g),
\label{eq:gap-recurrence}
\end{equation}
and $H((n))=n/4$. The desired cost is
$G_n=H((1,\ldots,1))$.
\end{corollary}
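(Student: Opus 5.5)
The plan is a first-step (Markov) decomposition of the greedy process on gap states, built from the two one-step results of Section~\ref{sec:gaps}. The first task is to check that the cyclic gap vector $\mathbf g$ is a sufficient state for greedy. Every state reachable from the initial configuration is determined by its cyclic gap vector up to rotation. Rotations preserve the arrival law, the distance, and the greedy rule, so the expected remaining cost $H$ is a well-defined function of $\mathbf g$. It also agrees with $J(S)$ from Proposition~\ref{prop:occupied-recurrence} for any occupied set $S$ that realizes $\mathbf g$.

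Next, for $k\geq2$, condition on the next arrival. Its expected immediate cost is $\frac{1}{4n}\sum_j g_j^2$ by Theorem~\ref{thm:gap-cost}. The facility it consumes is $f_i$ with probability $(g_{i-1}+g_i)/(2n)$, and the resulting state is $M_i\mathbf g$, both by Theorem~\ref{thm:gap-transition}. Tie points form a null set, so tie-breaking contributes nothing. The Voronoi lengths $(g_{i-1}+g_i)/2$ sum to $n$, so these probabilities sum to one. The tower property, applied to immediate cost plus remaining cost, then gives \eqref{eq:gap-recurrence}. This is the same argument as the proof of Proposition~\ref{prop:occupied-recurrence}, rewritten in gap coordinates.

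The terminal case $k=1$ has one free facility and one remaining arrival. Its cost is the expectation of $\min\{x,n-x\}$ for $x$ uniform on $[0,n]$, which is $n^2/(4n)=n/4$; this is Theorem~\ref{thm:gap-cost} with $\mathbf g=(n)$. Finally, the initial state has all $n$ facilities free with unit spacing, so $\mathbf g=(1,\ldots,1)$ and $G_n=H((1,\ldots,1))$. Each step lowers $k$ by one, so the recurrence terminates and determines $H$ uniquely by induction on $k$.

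The only delicate point is the $k=2$ case. There $g_{i-1}$ and $g_{i+1}$ coincide cyclically, and merging either facility gives the state $(n)$. The formula still holds because $M_1\mathbf g=M_2\mathbf g=(n)$ and the two probabilities sum to $(2g_1+2g_2)/(2n)=1$. Apart from that, the result follows directly from the earlier theorems.
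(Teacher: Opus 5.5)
Your proposal is correct and follows the paper's argument: a first-step decomposition combining Theorems~\ref{thm:gap-cost} and~\ref{thm:gap-transition}, with the terminal value $H((n))=n/4$ obtained from the one-facility integral of $\min\{x,n-x\}$. Your additional checks simply make explicit what the paper's two-line proof leaves implicit: that $H$ is well defined on gap states by rotation invariance, that ties form a null set, and that the $k=2$ case is consistent.
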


\begin{proof}
Combine Theorems~\ref{thm:gap-cost} and~\ref{thm:gap-transition}. With one
free facility, the expected distance of a uniform point on a length-$n$ circle
to that facility is $n/4$, which also agrees with
\eqref{eq:gap-cost}.
\end{proof}

The gap formula already gives a general lower bound. When $k$ gaps remain,
Cauchy--Schwarz and \eqref{eq:gap-composition} give
$\sum_i g_i^2\geq n^2/k$. Summing the resulting one-step bounds yields
\begin{equation}
G_n\geq\frac n4\sum_{k=1}^n\frac1k
=\frac n4H_n.
\label{eq:harmonic-lower}
\end{equation}
This lower bound is not tight; the exact order is $n^{3/2}$.

\section{Exact computations and validation}\label{app:validation}
\suppressfloats[t]

The occupied-state recurrence has one value for each subset of facilities,
hence $2^n$ states before symmetry reduction. By contrast, the product formula
\eqref{eq:exact-product} can be evaluated using $n-1$ recurrence updates and
one length-$n$ sum:
\[
a_{n,0}=1,\qquad
a_{n,d}=a_{n,d-1}\left(1-\frac dn\right),
\qquad
G_n=\frac n4\sum_{d=0}^{n-1}a_{n,d}.
\]
Thus the number of scalar recurrence steps is linear in $n$. For exact
rational arithmetic, we do not claim linear bit complexity.

For independent checks, we evaluated both the gap recurrence
\eqref{eq:gap-recurrence} and the occupied-set recurrence
\eqref{eq:occupied-recurrence} using exact rational arithmetic. Gap states
were canonicalized under rotation and reflection, while occupied states were
integrated directly from their piecewise-affine distance functions.
Table~\ref{tab:exact-small} shows agreement of all three routes with
Theorem~\ref{thm:exact-formula}.

\begin{table}[htbp]
\centering
\caption{Exact greedy costs. The product formula, cyclic-gap recurrence, and
occupied-set integration agree on every displayed instance.}
\label{tab:exact-small}
\begin{tabular}{c c c}
\toprule
$n$ & Exact $G_n$ & Decimal\\
\midrule
3 & $17/12$ & 1.416667\\
4 & $71/32$ & 2.218750\\
5 & $1569/500$ & 3.138000\\
6 & $899/216$ & 4.162037\\
7 & $355081/67228$ & 5.281743\\
8 & $425331/65536$ & 6.490036\\
9 & $16541017/2125764$ & 7.781210\\
\bottomrule
\end{tabular}
\end{table}

Table~\ref{tab:large-exact} gives the corrected large-instance values. The final column approaches the constant in
Theorem~\ref{thm:asymptotic}.

\begin{table}[htbp]
\centering
\caption{Large-instance values evaluated from the exact product recurrence;
all displayed entries are rounded only after evaluation.}
\label{tab:large-exact}
\begin{tabular}{r r r r}
\toprule
$n$ & $G_n$ & $G_n/n$ & $G_n/n^{3/2}$\\
\midrule
20   & 26.467923   & 1.323396  & 0.295920\\
50   & 106.789088  & 2.135782  & 0.302045\\
100  & 305.249016  & 3.052490  & 0.305249\\
500  & 3462.029339 & 6.924059  & 0.309653\\
1000 & 9825.803232 & 9.825803  & 0.310719\\
\bottomrule
\end{tabular}
\end{table}

\subsection{Correction of the earlier numerical values}\label{app:numerical-correction}
The earlier study \citep{riad2026isco} reported large-instance values
obtained by simulation. After deriving Theorem~\ref{thm:exact-formula}, we found that those
values were inconsistent both with the exact analysis and with an independently
written direct simulation of the stochastic process. We have not identified
the cause in the original implementation and therefore do not attribute the
discrepancy to a particular bug. As a reproducible check, 50,000 independent
runs for each size, using base pseudorandom seed $20260814$ and seed
$20260814+n$ for size $n$, gave sample means and normal-approximation 95\%
intervals $26.502388\ [26.438695,26.566082]$ for $n=20$,
$106.702509\ [106.457516,106.947502]$ for $n=50$, and
$305.283159\ [304.598637,305.967680]$ for $n=100$. Each interval contains the
corresponding product-formula value, rounded in
Table~\ref{tab:large-exact}. All large-instance
values reported in the present paper are rounded evaluations of the exact
product formula.

\begin{table}[htbp]
\centering
\caption{Independent direct-simulation check. Intervals are
normal-approximation 95\% confidence intervals; exact means come from the
product formula.}
\label{tab:simulation-check}
\setlength{\tabcolsep}{4pt}
\begin{tabular}{r r r r c r}
\toprule
$n$ & runs & seed & mean & 95\% CI & exact mean\\
\midrule
20 & 50,000 & 20260834 & 26.502388 & $[26.438695,26.566082]$ & 26.467923\\
50 & 50,000 & 20260864 & 106.702509 & $[106.457516,106.947502]$ & 106.789088\\
100 & 50,000 & 20260914 & 305.283159 & $[304.598637,305.967680]$ & 305.249016\\
\bottomrule
\end{tabular}
\end{table}

The occupied-set recurrence, the gap-state recurrence, and the product formula
are three mathematically distinct computational routes. Agreement among them
on the finite instances above checks the indexing, distance convention, and
terminal step. A source-distribution verification script evaluates the exact
formula and the optimal Bellman recurrence using rational arithmetic and also
provides the independent direct simulator used above.

\subsection{Reproducibility and independent checks}\label{sec:reproducibility}
Supplementary Material~1 contains the standard-library Python program
\texttt{code/verify\_ofa.py}, the generated tables in machine-readable
CSV form, and an expected-results certificate. Exact computations use
\texttt{fractions.Fraction}; no floating-point quadrature enters any claimed
exact value. Running \texttt{--test} performs the following checks:
\begin{enumerate}
\item equality of the product, gap-state, and independently integrated
occupied-state routes for $3\leq n\leq8$;
\item the exact Bellman values and complete square certificate;
\item the closed form \eqref{eq:h-closed} against independent piecewise-affine
integration;
\item the Cayley-tree distance law by enumeration of every Pr\"ufer sequence
through $n=6$; and
\item the susceptibility/connectivity identity by enumerating every labeled
tree and every edge order for $3\leq n\leq5$.
\end{enumerate}
The reported simulation was reproduced with CPython~3.12.14,
\texttt{random.Random(20260814+n)}, 50,000 independent runs for each $n$, the
sample standard deviation with denominator $R-1$, and multiplier $1.96$ for
the displayed intervals. The exact claims do not depend on the pseudorandom
generator or on those intervals.

\subsection{Additional optimal-online values}

Table~\ref{tab:optimal-small} gives further exact-rational evaluations,
rounded only for display. These data do not imply that greedy is
asymptotically optimal.

\begin{table}[htbp]
\centering
\caption{Greedy versus the optimal stochastic online policy. Both columns are
obtained by exact rational recurrences; displayed decimals are rounded. The
last column is $100(G_n/V_n^\star-1)$.}
\label{tab:optimal-small}
\begin{tabular}{c c c c}
\toprule
$n$ & $G_n$ & $V_n^\star$ & Greedy excess (\%)\\
\midrule
3 & 1.416667 & 1.416667 & 0.000\\
4 & 2.218750 & 2.216797 & 0.088\\
5 & 3.138000 & 3.130760 & 0.231\\
6 & 4.162037 & 4.145260 & 0.405\\
7 & 5.281743 & 5.250560 & 0.594\\
\bottomrule
\end{tabular}
\end{table}

\section{Exact Bellman certificate for the square}\label{app:n4-certificate}

Table~\ref{tab:n4-certificate} lists every substitution needed to verify
Proposition~\ref{prop:not-optimal}. The sets shown are representatives of all
dihedral occupied-set orbits. Values for other sets follow by rotating or
reflecting the square. Within each row block, averaging the last column over
the four edges gives the Bellman value for that representative; those values
are then used in the next block below it.

\small
\begin{longtable}{c r r r r r}
\caption{Complete edge-integral certificate for the square. For each occupied
set $S$ and edge $e$, the table gives $A_e(S)$, $B_e(S)$, the crossing
$\tau=(1+B_e-A_e)/2$, and $h(A_e,B_e)$.}
\label{tab:n4-certificate}\\
\toprule
$S$ & $e$ & $A_e$ & $B_e$ & $\tau$ & $h(A_e,B_e)$\\
\midrule
\endfirsthead
\multicolumn{6}{c}{\tablename\ \thetable\ (continued)}\\
\toprule
$S$ & $e$ & $A_e$ & $B_e$ & $\tau$ & $h(A_e,B_e)$\\
\midrule
\endhead
\midrule
\multicolumn{6}{r}{Continued on next page}\\
\endfoot
\bottomrule
\endlastfoot
$\{0,1,2,3\}$ & -- & -- & -- & -- & $0$\\
$\{0,1,2\}$ & 0 & $1$ & $2$ & $1$ & $\frac{3}{2}$\\
$\{0,1,2\}$ & 1 & $2$ & $1$ & $0$ & $\frac{3}{2}$\\
$\{0,1,2\}$ & 2 & $1$ & $0$ & $0$ & $\frac{1}{2}$\\
$\{0,1,2\}$ & 3 & $0$ & $1$ & $1$ & $\frac{1}{2}$\\
\addlinespace
$\{0,1\}$ & 0 & $2$ & $2$ & $\frac{1}{2}$ & $\frac{9}{4}$\\
$\{0,1\}$ & 1 & $2$ & $1$ & $0$ & $\frac{3}{2}$\\
$\{0,1\}$ & 2 & $1$ & $1$ & $\frac{1}{2}$ & $\frac{5}{4}$\\
$\{0,1\}$ & 3 & $1$ & $2$ & $1$ & $\frac{3}{2}$\\
\addlinespace
$\{0,2\}$ & 0 & $2$ & $1$ & $0$ & $\frac{3}{2}$\\
$\{0,2\}$ & 1 & $1$ & $2$ & $1$ & $\frac{3}{2}$\\
$\{0,2\}$ & 2 & $2$ & $1$ & $0$ & $\frac{3}{2}$\\
$\{0,2\}$ & 3 & $1$ & $2$ & $1$ & $\frac{3}{2}$\\
\addlinespace
$\{0\}$ & 0 & $\frac{21}{8}$ & $\frac{13}{8}$ & $0$ & $\frac{17}{8}$\\
$\{0\}$ & 1 & $\frac{13}{8}$ & $\frac{3}{2}$ & $\frac{7}{16}$ & $\frac{463}{256}$\\
$\{0\}$ & 2 & $\frac{3}{2}$ & $\frac{13}{8}$ & $\frac{9}{16}$ & $\frac{463}{256}$\\
$\{0\}$ & 3 & $\frac{13}{8}$ & $\frac{21}{8}$ & $1$ & $\frac{17}{8}$\\
\addlinespace
$\varnothing$ & 0 & $\frac{1007}{512}$ & $\frac{1007}{512}$ & $\frac{1}{2}$ & $\frac{1135}{512}$\\
$\varnothing$ & 1 & $\frac{1007}{512}$ & $\frac{1007}{512}$ & $\frac{1}{2}$ & $\frac{1135}{512}$\\
$\varnothing$ & 2 & $\frac{1007}{512}$ & $\frac{1007}{512}$ & $\frac{1}{2}$ & $\frac{1135}{512}$\\
$\varnothing$ & 3 & $\frac{1007}{512}$ & $\frac{1007}{512}$ & $\frac{1}{2}$ & $\frac{1135}{512}$\\
\end{longtable}
\normalsize

\section*{Funding}
This research did not receive any specific grant from funding agencies in
the public, commercial, or not-for-profit sectors.

\section*{Declaration of competing interests}
The authors declare that they have no known competing financial interests
or personal relationships that could have appeared to influence the work
reported in this paper.

\section*{CRediT authorship contribution statement}
\textbf{Md. Tanzeem Rahat:} Conceptualization, Methodology, Formal analysis,
Writing -- original draft, Writing -- review and editing.
\textbf{Md. Rawha Siddiqi Riad:} Conceptualization, Methodology, Software,
Formal analysis, Investigation, Writing -- original draft.
\textbf{Md. Manzurul Hasan:} Conceptualization, Supervision, Methodology,
Formal analysis.

\section*{Data and code availability}
No external datasets were used. The code and generated numerical data are
included in Supplementary Material~1. Repository record:
\texttt{[REPOSITORY URL OR DOI TO BE ADDED BEFORE SUBMISSION]}.

\section*{Declaration of generative AI and AI-assisted technologies in the manuscript preparation process}
During preparation and revision of this manuscript, the authors used OpenAI
Codex to assist with language editing, manuscript organization, comparison
with related literature, LaTeX preparation, and reproducibility checks.
The authors are responsible for reviewing and verifying the resulting text,
references, mathematical arguments, and code, and take full responsibility
for the content of the article.

\section*{Acknowledgements}
The authors acknowledge the American International University-Bangladesh
(AIUB) for institutional support. The authors also thank Reyan Ahmed,
Department of Computer Science, University of Arizona, for valuable initial
discussions and for suggesting research directions that helped shape the
formulation of this work.

\bibliographystyle{elsarticle-num-names}
\bibliography{references}
\end{document}